\documentclass[pra,showpacs,twocolumn,superscriptaddress]{revtex4}




\usepackage{mathtools}
\usepackage{float}
\usepackage{graphicx}
\usepackage{epsfig}
\usepackage{dcolumn}
\usepackage{amsmath,amssymb,amsthm,amsfonts}
\usepackage{bm}
\usepackage{verbatim}


\newtheorem{theorem}{Theorem}
\newtheorem{definition}{Definition}
\newtheorem{lemma}{Lemma}

\let\originalleft\left
\let\originalright\right
\renewcommand{\left}{\mathopen{}\mathclose\bgroup\originalleft}
\renewcommand{\right}{\aftergroup\egroup\originalright}

\newcommand{\bra}[1]{\left\langle #1 \right|}
\newcommand{\ket}[1]{\left| #1 \right\rangle}
\newcommand{\braket}[2]{\left\langle #1 \middle| #2 \right\rangle}
\newcommand{\ketbra}[2]{\left|#1\middle\rangle\middle\langle#2\right|}

\newcommand{\abs}[1]{\left|#1\right|}

\newcommand{\comm}[2]{\left[#1,#2\right]}
\newcommand{\anti}[2]{\left\{#1,#2\right\}}
\newcommand{\M}[1]{\mathcal{#1}}

\newcommand{\ad}[1]{a^{\dagger}_{#1}}


\newcommand{\id}{\mathbb{I}}
\newcommand{\B}[1]{\mathbb{#1}}
\newcommand{\Tr}{\text{Tr}}



\begin{document}
\title{Quantumness of Correlations in Fermionic Systems} 
\author{Tiago Debarba} 
\affiliation{Universidade Tecnol\'ogica Federal do Paran\'a
(UTFPR), Campus Corn\'elio Proc\'opio, Avenida Alberto Carazzai
1640, Corn\'elio Proc\'opio, Paran\'a 86300-000, Brazil}
\email{debarba@utfpr.edu.br}
\author{Reinaldo O. Vianna}
\affiliation{Departamento de F\'{\i}sica - ICEx - Universidade Federal de Minas Gerais,
Av. Pres.  Ant\^onio Carlos 6627 - Belo Horizonte - MG - Brazil - 31270-901.}
\author{Fernando Iemini}
\affiliation{Abdus Salam ICTP, Strada Costiera 11, I-34151
Trieste, Italy}
\email{fernandoiemini@gmail.com}
\date{\today}
\begin{abstract}

 We present a new approach for the quantification of quantumness of correlations in 
fermionic systems.
We study the Multipartite Relative Entropy of Quantumness in such systems, and show how the symmetries 
in the states can be used to obtain analytical solutions.
Numerical evidences about the uniqueness of such solutions are also presented. 
Supported by these results, 
we show that the minimization of the Multipartite Relative Entropy of Quantumness, over certain choices of its 
modes multipartitions, reduces to the notion of Quantumness of Indistinguishable Particles. 
By means of an {\it activation protocol}, we characterize the class of states without quantumness of correlations.
As an example, we calculate the dynamics of 
quantumness of correlations for a purely dissipative system, whose stationary states exhibit interesting topological non-local correlations. 

\end{abstract}
\pacs{03.67.Mn, 03.65.Aa}
\maketitle

\section{Introduction}
The understanding of quantum correlations in systems of indistinguishable particles, especially fermions,
is paramount for the development of materials supporting the new technologies of Quantum Information
and Quantum Computation \cite{chuang,review.Nayak_2008}. 
The subtle notion of entanglement of indistinguishable particles has been investigated by many authors back in the
2000's, with introduction of diverse seminal ideas like entanglement of modes  \cite{zanardi02},
 and entanglement of particles \cite{wiseman03,eckert02}.  Such ideas have been  applied as new tools in the investigation of many-body system properties 
 \cite{amico08}, including the characterization of quantum phase transitions \cite{iemini15,amico08}.

From a mathematical viewpoint, the difficulty of understanding entanglement of indistinguishable particles stems from  the absence of a tensor product structure in the
Fock space, whereas the concept of entanglement is based on the non-separability, with respect to the tensor product, 
of a global state of identifiable subsystems. 
Friis {\it et al.} \cite{friis13} and Balachandran {\it et al.} \cite{balachandran}
 have suggested interesting mathematical approaches to circumvent this obstacle. 
 Our own efforts in this problem started with the proposal of entanglement witnesses for systems of indistinguishable particles \cite{iemini13}, followed by
 appropriate adaptations of   entropy of entanglement and negativity \cite{iemini_ComputableMeasures}.  
 More recently, extending the concept of {\it quantumness of correlations}  \cite{henderson,zurek2001,modiprl2010,debarbapra} 
 to the realm of indistinguishable particles, we introduced the concept of 
 {\it quantumness of correlations of indistinguishable particles} \cite{iemini14}, which allowed us to devise a measurement procedure 
 (dubbed an {\it activation protocol}) \cite{pianiprl,streltsov11} that determines the class of states without quantumness of correlations. 
  In this activation protocol the quantumness of correlations of the system 
  manifests itself as the smallest amount of bipartite entanglement created between system and measurement ancilla, 
  during the local measurement protocol. 
  
In order to recover the tensor product structure and the separability 
of the subsystems, one may use the isomorphism of the Fock 
space to a Hilbert space of distinguishable modes. 
 This approach \cite{zanardi02,wiseman03,benatti10,banulus07}
 allows one to employ all the tools commonly
 used in distinguishable quantum systems for
 the analysis of the correlation between the modes of the system.
  The two notions of correlation in such systems, namely modes or particle
   correlation, follows naturally depending 
on the particular situation under scrutiny. For example,
the correlations in eigenstates of a many-body Hamiltonian
could be more naturally described by particle entanglement,
whereas certain quantum information protocols could prompt
a description in terms of entanglement of modes.


{As the composed space of modes has a tensor product structure,
 given  the modes are distinguishable, and there exists an isomorphism connecting 
the Hilbert space of modes and the Fock space of particles,
 some questions naturally  arise: What is the relation between a correlated system of distinguishable modes 
and the correlations of indistinguishable particles?}
Is it possible to characterize or quantify the quantum correlations of particles by means of their mode quantum correlations? 
Can we characterize the set of uncorrelated states of indistinguishable particles out of  the  description of  distinguishable modes? 
In this work we present a new approach for 
the  quantumness of correlations of indistinguishable 
particles described by the minimization of the modes 
  representation of the particles system. 
  We show that, in the single particle partitioning, both notions are equivalent. 
    
  This work is organized as follows. 
  In Sec. \ref{form}, we present the formalism of Fock space for fermions, and its isomorphism with the Hilbert space of modes. 
  In Sec. \ref{quant.lema1}, we present the local measurement 
  formalism and introduce a quantifier of quantumness of correlations based on the 
  local disturbance. We also present one of the main results of this work, proving 
  that for quantum states 
with a given arbitrary symmetry, optimal local projective measurements - which 
minimize the local disturbance - are symmetric.
In Sec. \ref{modes.part.quant.}, we discuss 
 the connection between quantumness of correlations of modes and quantumness of 
correlations of particles for single particle modes. 
{In Sec. \ref{act.prot.sec}, we characterize the set of  fermionic states  without quantumness of correlations.
This results is obtained from an   
activation protocol for a system of L  modes.  	}
In Sec. \ref{dissip.system},  we illustrate  our results studying the dynamics of quantumness of correlations of a purely dissipative system. 
In Sec.  \ref{num.res},  we analyze, numerically, the quantumness of correlations of more general symmetric quantum states and their respective local projectors. 
Conclusions are presented in Sec. \ref{conc}.

\section{Formalism}\label{form}
The space of quantum states for systems of indistinguishable
fermions (bosons) is given by the anti-symmetric (symmetric)
Hilbert-Schmidt subspace. Along all the paper, 
for simplicity, we will focus our calculations on the fermionic case, despite 
all results could be easily 
translated to the bosonic case. Formally, the  quantum states for a
fermionic system with $L$ modes are in the  the Fock space
($\M{F}_L$), namely, 
\begin{equation*}
\M{F}_L = \ket{vac}\bra{vac} \oplus \mathcal{A}(\mathcal{H}^L)
\oplus \mathcal{A}(\mathcal{H}^L \otimes \mathcal{H}^L)
\oplus ... \oplus \mathcal{A}(\mathcal{H}^{L^{\otimes L}}),
\end{equation*}
\textit{i.e.}, the direct sum of anti-symmetric Hilbert spaces ($
\mathcal{A}(\mathcal{H}^{L^{\otimes N}})$) with fixed
$N=0,1,...,L$ fermions, with $| vac \rangle$ being the vacuum state.
Recall that the direct sum is defined as:
\begin{equation}
A \oplus B = \left[\begin{array}{cc}
A & 0 \\
0 & B 
\end{array} \right].
\end{equation}
The dimension of the Fock space is given by:
\begin{equation}\label{eq-rov11}
N_F^L = \sum_{i=0}^L \frac{L!}{(L-i)! i!} = 2^L.
\end{equation}
A basis can be generated from a set of
single-particle fermionic operators $\{a_j\}_{j=1}^L$ for the
$L$ modes,
satisfying the canonical
anti-commutation relations:
\begin{equation}\label{eq-rov10}
\{a_i,a_j^\dagger\}=\delta_{ij}, \,\,\, \{a_i,a_j\}=0.
\end{equation}
We represent the basis with the short notation,
\begin{equation}\label{eq:basis.Fock}
| \ad{\vec{j}} \rangle \equiv (\ad{1})^{j_1} (\ad{2})^{j_2}
...\, (\ad{L})^{j_L} \ket{vac},
\end{equation}
with $\vec{j} = (j_1,...,j_L)$, and $j_i = 0 (1)$ denoting an
empty (occupied) mode.

By means of the \textit{occupation-number representation}, the Fock space can be associated to a 
Hilbert space of $L$ qubits with a $2^L$-dimensional basis, $\{\ket{j_1j_2\ldots j_L}\}$, where $j_\ell$ is $0$ or $1$ for unoccupied or occupied modes, respectively.

Formally, these two equivalent representations are related by the 
 following isomorphism $\Lambda$:
\begin{eqnarray}\label{isomorphism}
\Lambda:\M{F}_L & \longleftrightarrow & 
\mathcal{H}^{2}_1 \otimes \cdots \otimes \mathcal{H}^{2}_L   \\
| \ad{\vec{j}} \rangle \equiv (\ad{1})^{j_1} 
...\, (\ad{L})^{j_L} \ket{vac} &\longleftrightarrow & \ket{j_1}
\otimes ... \otimes \ket{j_L} \equiv | \vec{j}\rangle, \nonumber
\end{eqnarray}
which maps the Fock space into the space of $L$ distinguishable
modes (qubits in the fermionic case). 
We will denote
hereafter as ``configuration
representation (modes representation)'' the left (right) side of
the previous equation. 
Notice that, in principle, the basis $\{a_j\}_{j=1}^L$ of
fermionic operators is arbitrarily chosen, and the previous
isomorphism is completely dependent on it. 
Therefore different modes representations can be obtained by means of a unitary 
transformation on the Fock space $U\in\M{U}(\M{F}_L)$, named Bogoliubov transformations,
\begin{equation}\label{loc.unit}
\tilde{f}^{\dagger}_k = \sum_j U_{kj} \ad{j}, 
\end{equation}   
where $\{\tilde{f}^{\dagger}_k\}_{k=1}^{L}$ and $\{\ad{j}\}_{j=1}^{L}$ are single particle fermionic operators.


\section{Quantumness of Correlations}\label{quant.lema1}

In this section we will first introduce the description of local
measurements
on the modes. In this context, we will introduce a
quantifier
of quantumness of correlations based on the smallest 
disturbance created by local measurements. We will then be ready
to present
one of our main results, proving that for quantum states 
with a given arbitrary symmetry, 
the optimal local projective measurement, \textit{i.e.,} the
local projectors
which creates the smallest disturbance in the state, is that
which shares the
same symmetry with the quantum state.

Let us first formally define the concept of projective
measurement. Given a general bi-partition of the $L$ modes
as $j_A$/$j_B$, with $j_B = L-j_A$, a general set
of local projective measurements $\{ \hat{\Pi}^{(j_A)}_m\}_{m} $
acting on $j_A$ is defined as:
\begin{eqnarray}
\hat{\Pi}_m^{(j_A)} \hat{\Pi}_{n}^{(j_A)} &=& \delta_{m,n}
\hat{\Pi}_m^{(j_A)}, \\
\sum_m \hat{\Pi}_m^{(j_A)} &=& \id.
\end{eqnarray}

The quantum state $\rho\in\M{D}(\M{H}^{2j_A}\otimes\M{H}^{2j_B})$, 
where $\mathcal{D}$
denotes the set of all positive semi-definite operators, 
after such measurement
is described as a local dephased state, given by:
\begin{eqnarray}
 \Pi^{(j_A)} (\rho) &=& \sum_m \hat{\Pi}_m^{(j_A)} \rho
\hat{\Pi}_m^{(j_A)} \nonumber \\
&=& \sum_m \Tr_{j_A}(\hat{\Pi}_m^{(j_A)} \rho) \,
\hat{\Pi}_m^{(j_A)} \nonumber \\
&=& \sum_m p_{\Pi_m}^{(j_A)} \,\hat{\Pi}_m^{(j_A)}\otimes
\sigma_m^{(j_B)},
\label{gen.projective.meas}
\end{eqnarray}
where 
\begin{equation}
p_{\Pi_m}^{(j_A)} = \Tr(\hat{\Pi}_m^{(j_A)} \rho),\qquad
\sigma_m^{(j_B)} = \frac{1}{p_{\Pi_m}^{(j_A)}}
\Tr_{j_A}(\hat{\Pi}_m^{(j_A)} \rho),
\end{equation}
with $\sigma_m^{(j_B)}$ representing the reduced state, which
might be a mixed state, in the complementary $j_B$ subspace.


The quantification of quantumness of correlations in a quantum
system
can be performed by means of the smallest disturbance created 
by a local measurement \cite{brodutch12}. Such disturbance could
be quantified
as the distance between the original state and the measured 
one \cite{fu08,debarbapra, taka, modiprl2010}. 
In this way we define the multipartite relative entropy of
quantumness
\cite{modiprl2010,opp}:

\begin{definition}[\textbf{Multipartite Relative Entropy of
Quantumness- MREQ}]\label{defimult}
Given an arbitrary multipartite quantum state  
$\rho \in \M{D}(\mathcal{H}_1 \otimes\cdots\otimes
\mathcal{H}_n)$ and
a set $\Sigma\subseteq\{1,2,\dots,n\}$, the Multipartite
Relative Entropy
of Quantumness on the subsystems $\Sigma$ is defined as :
\begin{equation}
Q^{\Sigma}(\rho) = \min_{\{\bigotimes_{j\in\Sigma}
 \Pi^{(j)}\} }
S\left(\rho\left|\right|\bigotimes_{j\in\Sigma}
\Pi^{(j)}(\rho)\right)
\end{equation}
where $ \Pi^{(j)}(\rho)$ is a local projective measurement map over the
$j$'th partition of $\rho$ and
$S(\rho||\sigma) = - S(\rho) - Tr(\rho \log \sigma)$ is the
relative entropy between $\rho$ and $\sigma$,
 with $S(\rho)$ the von Neumann entropy.
\end{definition}

Considering a bi-partition on the modes, $\rho\in\M{D}(\M{H}_{A}\otimes\M{H}_{B})$, and restricting the set $\Sigma = A$, the relative entropy is
the 
quantifier of quantumness of correlations corresponding to the  {\it One-Way
Work Deficit} \cite{opp}:
\begin{eqnarray}
Q^{(A)}(\rho) &=& \min_{\{  \Pi^{A}\} } S(\rho||
 \Pi^{A}(\rho)) \nonumber \\
&=&  
\min_{\{   \Pi^{A}\} } \left[ S(
 \Pi^{A}(\rho))-S(\rho) \right].
\label{defi2}
\end{eqnarray}

The computation of the above quantifiers is not a trivial task,
 in general,  due to the minimization over all local projective
measurements. The solution of
the above minimization might not even be unique; however, by
definition, it is always a set of
rank-1 local projectors
\cite{horodecki2005local}. One could try to use some information
of the
quantum state under analysis in order to solve the optimization, 
or at least
to restrict the search of the optimal projective measurement to
a subset.
In the context of ensemble of quantum states generated by a
symmetric group, it is known that the POVM which  optimizes the
accessible information is
also created by the symmetric group \cite{ban97}.
It is analogous to the quantum state discrimination problem, where the concerned states
and the optimal POVM have the same symmetry \cite{sasaki99,davies}. 
In Ref.\cite{dariano06},  it is shown that, under the 
action of a symmetric group in the probability space, the extremal covariant POVMs
are  in one-to-one correspondence
to  the convex set of  block diagonal
operators on the Hilbert space.
Therefore symmetries can be written as degenerated observables.
Following this approach, 
we will show here that, in the context of quantumness of
correlations, if the quantum state has a given symmetry, the
minimization task can be performed
restricted to symmetric projective measurements. 
This simplification follows from the following Lemma.
  
\begin{lemma}\label{lemma}
Consider a symmetry
$\hat{\Theta}\in\M{L}(\M{H}_1\oplus\cdots\oplus\M{H}_N)$,
where $\M{L}(\cdots)$  denotes
 the set of linear operators, 
$\theta_j$ are the eigenvalues of the symmetry and
$\M{H}_j$ their corresponding
block diagonal subspace.
Given a quantum state $\rho \in \mathcal{D}(\mathcal{{H}}_j=\M{{H}}_j^A\otimes \M{{H}}_j^B)$,
\textit{i.e., } a quantum state
with non trivial projection onto a single eigenvalue of the
symmetry,
there exists a set of {symmetric} local projective 
measurements $\{ \hat P_{\ell}^{(A)} \}_\ell $ 
that is a solution of the One-Way Work Deficit - Eq.\eqref{defi2}. This projective
measurement acts locally on the $A$ bi-partition,
{either} preserving the state symmetry,  
$ \hat P_{\ell}^{(A)} \rho \hat P_{\ell}^{(A)}  \in \mathcal{D}(\mathcal{{H}}_j)$,
 {or  annihilating the state, 
 $\hat P_{\ell}^{(A)} \rho \hat P_{\ell}^{(A)}  = 0 $,
  in case the symmetry subspace defined by $\hat P_{\ell}^{(A)}$ is orthogonal 
  to the state symmetry.}
\end{lemma}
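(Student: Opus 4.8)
The plan is to turn the One-Way Work Deficit of Eq.~\eqref{defi2} into a minimization over orthonormal bases of $\M{H}^A$, and then to use the invariance of $\rho$ under the group generated by $\hat\Theta$ to transport any optimizer into the symmetric (block-diagonal) subspace. The existence of a symmetric optimizer, together with the block structure of $\rho$, should yield both alternatives --- symmetry preservation and annihilation --- in one stroke.

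First I would fix the objective. The cited rank-1 optimality lets me write $\hat P_\ell^{(A)}=\ketbra{\phi_\ell}{\phi_\ell}$ for an orthonormal basis $\{\ket{\phi_\ell}\}$ of $\M{H}^A$, and the pinching identity $\Tr(\rho\log\Pi^{(A)}(\rho))=\Tr(\Pi^{(A)}(\rho)\log\Pi^{(A)}(\rho))$ reduces the deficit to $Q^{(A)}(\rho)=\min\big[S(\Pi^{(A)}(\rho))-S(\rho)\big]$. By Eq.~\eqref{gen.projective.meas} the output is block-diagonal on $A$, so $S(\Pi^{(A)}(\rho))=H(\{p_\ell\})+\sum_\ell p_\ell\,S(\sigma_\ell^{(B)})$ with $p_\ell=\bra{\phi_\ell}\rho_A\ket{\phi_\ell}$ and $\rho_A=\Tr_B\rho$; only this quantity must be minimized.

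Next I would encode the symmetry. Since $\rho$ is supported on the single eigenspace of $\hat\Theta$ with eigenvalue $\theta_j$, one has $\comm{\rho}{\hat\Theta}=0$ and $U_\alpha\rho U_\alpha^\dagger=\rho$ for $U_\alpha:=e^{i\alpha\hat\Theta}$; as the relevant symmetries (particle number, parity) respect the bipartition, $\hat\Theta=\hat\Theta^{A}\otimes\id+\id\otimes\hat\Theta^{B}$ and $U_\alpha=U_\alpha^{A}\otimes U_\alpha^{B}$. Invariance of $\rho$ plus unitary invariance of the relative entropy then give $S(\rho||U_\alpha^\dagger\Pi^{(A)}(\rho)U_\alpha)=S(\rho||\Pi^{(A)}(\rho))$, i.e. the rotated basis $\{U_\alpha^{A\dagger}\ket{\phi_\ell}\}$ is an equally good measurement, so the optimizer set is closed under the symmetry. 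Taking an optimal classical-on-$A$ state $\chi^{*}=\Pi^{(A)}(\rho)$ and twirling, $\bar\chi=\int d\alpha\,U_\alpha^\dagger\chi^{*}U_\alpha$, joint convexity of the relative entropy yields $S(\rho||\bar\chi)\le S(\rho||\chi^{*})=Q^{(A)}(\rho)$, while $\bar\chi$ is manifestly symmetric, $\comm{\bar\chi}{\hat\Theta}=0$.

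The grading $\M{H}^A=\bigoplus_a\M{H}_a^A$ into eigenspaces of $\hat\Theta^{A}$ then does the bookkeeping: a symmetric projector lies in one sector $\M{H}_a^A$, so $\hat P_\ell^{(A)}\rho\hat P_\ell^{(A)}$ is supported in $\M{H}_a^A\otimes\M{H}_{j-a}^B\subseteq\M{H}_j$ (symmetry preserved) and vanishes whenever that sector is orthogonal to the support of $\rho$ (state annihilated) --- precisely the two cases of the statement. The main obstacle is the final identification: $\bar\chi$ is symmetric and separable but need not itself be classical on $A$, so I must replace it by the output of a genuinely symmetric rank-1 projective measurement without raising $S(\rho||\cdot)$. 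I expect to close this sector by sector, using that $\rho_A=\Tr_B\rho$ is block-diagonal in the $\{\M{H}_a^A\}$ grading --- the off-sector coherences of $\rho$ are paired with the mutually orthogonal factors $\M{H}_{j-a}^B\perp\M{H}_{j-a'}^B$ and drop out under $\Tr_B$ --- so that diagonalizing $\rho_A$ within each block gives a symmetric basis whose post-measurement entropy I can bound, via concavity of $S(\cdot)$ and this $B$-block orthogonality, by that of $\bar\chi$. Verifying that last inequality, namely that inter-sector coherences of the measurement basis can never lower $S(\Pi^{(A)}(\rho))$, is the technical heart of the proof.
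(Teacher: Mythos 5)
Your setup (rank-1 pinchings, the identity $S(\rho||\Pi^{(A)}(\rho))=S(\Pi^{(A)}(\rho))-S(\rho)$, and the observation that the optimizer set is closed under $U_\alpha=e^{i\alpha\hat\Theta}$) is fine, but the proposal stalls exactly where you say it does, and that point is not a removable technicality --- it is the entire content of the Lemma. The twirled state $\bar\chi=\int d\alpha\, U_\alpha^{\dagger}\,\Pi^{(A)}(\rho)\,U_\alpha$ is a convex mixture of states that are classical on $A$ with respect to \emph{different} rotated bases $\{U_\alpha^{A\dagger}\ket{\phi_\ell}\}$; that set is not convex, so $\bar\chi$ is separable and symmetric but generically \emph{not} of the form $\tilde\Pi^{(A)}(\rho)$ for any projective measurement. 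Joint convexity therefore hands you a symmetric state lying outside the feasible set of the optimization defining $Q^{(A)}$, and to re-enter that set you need precisely the inequality you defer ("inter-sector coherences of the measurement basis can never lower $S(\Pi^{(A)}(\rho))$"), i.e. that some symmetric pinching does at least as well as the unconstrained optimum --- which is the Lemma itself. The fallback you sketch does not close it: diagonalizing $\rho_A$ sector by sector is not a justified choice (for one-way-deficit-type quantities the optimal measurement basis is notoriously \emph{not} the eigenbasis of the reduced state), and the comparison between $S(\rho||\bar\chi)$ and a pinched entropy cannot be made through the standard decomposition $S(\rho||\chi)=S(\rho||\Pi(\rho))+S(\Pi(\rho)||\chi)$, because that decomposition requires $\Pi(\chi)=\chi$ for the pinching $\Pi$ in question, which fails for $\bar\chi$ since it retains within-block coherences in an uncontrolled basis.

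The paper closes this step by a different mechanism, which you may want to compare with. It embeds the system via an isometry $V_X:\M{H}_S\rightarrow\M{H}_S\otimes\M{H}_X$ that copies the symmetry-sector label into an ancilla $X$, so that a symmetric state becomes block diagonal, $\rho_{SX}=\sum_j q_j\,\rho_j\otimes\ketbra{\theta_j}{\theta_j}_X$. Symmetric measurements are modeled as those diagonal in the sector basis $\{\ket{\theta_j}_X\}$, non-symmetric ones as those diagonal in a rotated basis $\{\ket{x}_X\}$, and monotonicity of the relative entropy under the partial trace over $X$ gives
\begin{equation*}
S(\rho_{ABX}||\Pi_{BX}(\rho_{ABX}))\;\geq\; S\left(\sum_j q_j\rho_j^{AB}\,\middle|\middle|\,\sum_j q_j\Pi_B(\rho_j^{AB})\right),
\end{equation*}
a lower bound which, for a state supported on a \emph{single} sector ($q_l=1$), is attained by a measurement that respects the block structure. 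In other words, the paper never leaves the set of measurement outputs: the data-processing step compares one measured state to another, rather than projecting a convex combination back onto a non-convex set. If you want to rescue your route, you need an analogous device --- an inequality comparing $S(\Pi^{(A)}(\rho))$ directly with the entropy of a symmetric pinching of $\rho$ --- rather than passing through $\bar\chi$.
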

According to the Lemma, the set of symmetric projective measurements
creates the smallest disturbance in the symmetric state. 
It explores the block diagonal structure of the
symmetry, in the sense that an
optimal symmetric projective measurement must preserve such
structure.
Consequently we have a great simplification of the optimization
problem in the 
computation of  the relative entropy of quantumness. 
As we will discuss in the next section, for some symmetries and modes partitions, 
there exists a unique 
projective measurement satisfying the Lemma conditions, thus
allowing us to compute the quantumness analytically.
The proof for the Lemma \ref{lemma} is given in the Appendix.

\textit{Example}. Let us present a simple example illustrating the previous discussion.
Assume a bipartite distinguishable system  ($A-B$) in a state with the following Schmidt decomposition:
\begin{equation} \label{eq-rov1}
\ket{\psi}=\sum_{i=1}^r \sqrt{p_i} \ket{a_i}\ket{b_i}.
\end{equation}
If $\{\ket{a_i}\}$ is an orthonormal basis in the Hilbert space of a qubit, while $\{\ket{b_i}\}$ is an orthonormal basis for a Hilbert space of arbitrary finite dimension, then $r$ is at most 2.
Consider a local projective measurement in the Schmidt basis,
\begin{equation}
 \Pi^A_1 = \ketbra{a_1}{a_1} \otimes \mathbb{I}_B,\quad  \hat \Pi^A_2 = \ketbra{a_2}{a_2} \otimes \mathbb{I}_B.
\end{equation}
Then we have:
\begin{eqnarray}
 \Pi^A(\rho) &=& \hat \Pi^A_1\rho \hat \Pi^A_1 + \hat \Pi^A_2 \rho \hat \Pi^A_2\ \nonumber \\
&=& p_1 \ketbra{a_1 b_1}{a_1 b_1} + (1-p_1) \ketbra{a_2 b_2}{a_2 b_2} 
\end{eqnarray}
with
\begin{eqnarray}
 \bra{\psi}(\ketbra{a_1}{a_1}\otimes \mathbb{I}_B)\ket{\psi} &=& p_1 \\
 \bra{\psi}(\ketbra{a_2}{a_2}\otimes \mathbb{I}_B)\ket{\psi} &=&(1-p_1)
\end{eqnarray}
and the relative entropy for such local measurement is given by,
\begin{equation}
S(\rho || \Pi^A (\rho) ) = h(p_1)=-p_1\log{p_1}-(1-p_1)\log(1-p_1).\nonumber
\end{equation}
Thus we have determined the entropy of entanglement of this system ($E(\rho) = S(\rho_A) = h(p_1)$, with
 $\rho_A = Tr_B(\rho)$)
 by means of local projective measurements. A lesson here is that it would be
easy to measure entanglement if we knew the Schmidt
basis.
Let us check what bound we would obtain measuring in an arbitrary local basis, namely:
\begin{eqnarray}\label{eq-rov5}
\ket{\tilde{a}_i}=\alpha_{1i}\ket{a_1}+ \alpha_{2i}\ket{a_2}, \\ \nonumber
\braket{\tilde{a}_i}{\tilde{a}_j}=\delta_{ij},
\end{eqnarray}
that is, 
\begin{equation}\label{eq-rov6}
\ket{\tilde{a}_i}=U\ket{a_i},
\end{equation}
where $U$ is an arbitrary unitary transformation.
Now, if we perform a local projective measurement in the basis $\{\ket{ \tilde{a}_i}\}$, we obtain:
\begin{eqnarray}\label{eq-rov7}
\tilde{p}_1=\bra{\psi}(\ketbra{\tilde{a}_1}{\tilde{a}_1}\otimes \mathbb{I}_B)\ket{\psi}, \\ \nonumber
\tilde{p}_1=\abs{\alpha_{11}}^2 p_1 + \abs{\alpha_{21}}^2 (1-p_1).
\end{eqnarray}
As the binary entropy is a concave function, we have:
\begin{eqnarray}\label{eq-rov8}
h(\tilde{p}_1)\geq \abs{\alpha_{11}}^2 h(p_1) + \abs{\alpha_{21}}^2 h(1-p_1), \\ \nonumber
\Rightarrow h(\tilde{p}_1)\geq h(p_1).
\end{eqnarray}
Therefore, a measurement in an arbitrary basis gives us an upper bound for the relative entropy, and
we could obtain the quantumness of correlations - One-Way Work Deficit - by a minimization over all different bases:
\begin{equation}\label{eq-rov9}
Q^A(\psi)=\min_U h(\tilde{p}_1).
\end{equation}

Now we will proceed to the case of a system of 
indistinguishable fermions, and discuss how 
some symmetries of the quantum state can be useful in determining the optimal local
measurements for the relative entropy.

We shall discuss the case of states with
definite parity. 
Consider the following projectors onto a single particle mode:
\begin{equation}\label{eq-rov14}
\hat \Pi_j=a_j^\dagger a_j, \,\,\, \hat \Pi_{\bar{j}}=a_j a_j^\dagger, \,\,\, \hat\Pi_j+ \hat\Pi_{\bar{j}}=1.
\end{equation}

Now an arbitrary Fock state with definite parity can be written as:
\begin{equation}\label{eq-rov15}
\ket{\psi}=\hat \Pi_j\ket{\psi}+ \hat \Pi_{\bar{j}} \ket{\psi}=\sqrt{p_j}\ket{\psi_{j}} + \sqrt{p_{\bar{j}} }\ket{\psi_{\bar{j}}},
\end{equation}
where we have defined the following state:
\begin{equation}\label{eq-rov16}
\ket{\psi_{j}}=\frac{\hat \Pi_j \ket{\psi}}{\sqrt{p_j}},\quad  p_j=\langle{\psi}|\hat \Pi_j|\psi\rangle,
\end{equation}
and analogously for $\ket{\psi_{\bar{j}}}$.
These projectors (Eq.\ref{eq-rov14}) define a bi-partition in the Fock space and 
directly determine the Schmidt decomposition for the state. $\ket{\psi_{j}}$ is in the subspace
where the mode $j$ is occupied, whereas $\ket{\psi_{\bar{j}}}$ is in the complementary subspace. 

Recalling the previous discussion for general states (Eq.\eqref{eq-rov1}), we easily conclude that the local
 measurement from these projectors determine the quantumness of correlations - One-Way Work Deficit - 
 for the state. Notice that the measurement preserves the symmetry of the state, according to 
 the Lemma, and the above rationale is valid for arbitrary states with parity symmetry.


 \section{Equivalence between Quantumness of Correlations of Modes and Particles}\label{modes.part.quant.}

In this section we will discuss the implication of
Lemma \ref{lemma} in the context
 of quantum states with fixed parity symmetry.
 In particular, we show that the notion of 
\textit{Quantumness between Indistinguishable 
Particles} can be recovered from quantumness between modes, by
 analyzing the minimum of the Multipartite Relative
Entropy
of Quantumness over certain choices of  modes multipartitions.

Since the quantumness of correlations is implicitly related to 
local measurements on composite systems, 
 the characterization, or even a proper definition, of the quantumness of correlations
  between \textit{indistinguishable particles} 
is a much subtler task. In such systems particles are no
longer accessible individually,
thus eliminating the usual notions of separability and local
measurements.
In Ref.\cite{iemini14},  the authors define such a notion of
quantumness of correlations
between 
indistinguishable particles
by means of the activation protocol, which 
relates the correlations to the 
smallest amount of entanglement created between system and
apparatus during a single
particle measurement. 
As  system and apparatus are always distinguishable, the
quantumness of correlations of
a indistinguishable system can be obtained from usual
distinguishable quantities.

Let us recall 
the measure of quantumness of correlations as proposed 
in Ref.\cite{iemini14}:

\begin{definition}[\textbf{Quantumness of correlations between
indistinguishable particles}]
Considering a system of particles described by the density
matrix $\rho\in\M{D}({\M{F}_{L}})$, the quantumness of
correlations between the
indistinguishable particles is defined by means
of the relative entropy, as follows \footnote{The index $\emptyset$ indicates that the local 
measurement is performed over each particle. This a multipartite analogous to the zero-way work deficit \cite{opp}.}:
\begin{equation}\label{quant.part.re}
\mathcal{Q}_p^{\emptyset}(\rho) = \min_{V\in\M{U}({\M{F}_{L}})} S(\rho||
 \hat \Pi^{V}(\rho) ),
\end{equation}
where \begin{equation}\hat{\Pi}^{V}(\rho) = \sum_{\{\vec{\ell}\}} \langle
a_{\vec{\ell}}| V^{\dagger} \rho V | a_{\vec{\ell}} \rangle \,
|a_{\vec{\ell}}\rangle \langle a_{\vec{\ell}}|,
\end{equation} with 
$\{ \ket{a_{\vec{\ell}}}\}$ being a single Slater determinant basis, as defined in Eq.\eqref{eq:basis.Fock},  and 
 $V$ a unitary transformation on Fock space, as defined in Eq.\eqref{loc.unit}. 
\end{definition}

An important consequence of the Lemma 
follows for quantum states with fixed parity symmetry.
Considering a
bi-partition of the modes between a single-particle mode
``$a_{j}$'' with the rest of the system, the set
of rank-$1$
local projective measurements onto such mode, preserving the symmetry of the 
state, reduces to a single 
possible set (and thus a solution for the One-Way Work Deficit):
\begin{eqnarray}\label{proj.meas}
\hat \Pi^{j}_{min} = \left\{ \hat \Pi^{j}_0 = a_{j} a_{j}^{\dagger}, \,
\hat \Pi^{j}_1 = a_{j}^{\dagger} a_{j} \right\}.
 \end{eqnarray}
In the case of a multi-partition of the system in $L$ subsystems, with each one
corresponding to a single-particle mode $a_j$, 
the set
of rank-$1$
local projective measurements onto such subsystems, 
preserving the symmetry of the 
state, reduces to:
\begin{equation}
\hat \Pi^{1\,2...L}_{min} = \left\{ \hat \Pi^{1}_\ell \otimes 
\hat \Pi^{2}_\ell \otimes \cdots \otimes \hat \Pi^{L}_\ell  \right\}, \quad (\ell=0,1),
\end{equation}
and thus represents  the constrained set solution for the MREQ.
 
With the previous considerations, we are now ready  to
 present one of the most important results of this work.
We relate
the notion of quantumness of correlations between
indistinguishable particles to
the MREQ minimized over all possible single-particle modes
partitions.
 
\begin{theorem}\label{teo1}
Given a system with $L$ modes, described by the 
state $\rho\in\M{D}(\M{H}_1^2\otimes\cdots\otimes\M{H}_L^2)$ with 
fixed parity symmetry (\textit{cf} Lemma), 
the minimization of the Multipartite Relative Entropy of
Quantumness, 
over all single-particle representations ``$\{a\}$'', is equal to the Quantumness
of correlations
 between indistinguishable particles: 
\begin{equation}\label{eq:quantumness.particles}
\M{Q}^{\emptyset}_p(\rho) = \min_{\{a\}}
Q_{a}^{1,\ldots,L}(\rho),
\end{equation}
where $a: \M{F}_L \leftrightarrow \M{H}^{2\otimes L}$ represents
 the isomorphism between Fock and modes space (Eq.\eqref{isomorphism}),  and
${1,\ldots,L}$
indicates that the measurement is performed locally over all
modes.
\end{theorem}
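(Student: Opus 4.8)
The plan is to fix a single-particle representation $\{a\}$ first, evaluate $Q_a^{1,\ldots,L}(\rho)$ by invoking Lemma \ref{lemma}, and only afterwards perform the outer minimization over representations, matching it term by term to the Bogoliubov minimization defining $\M{Q}_p^{\emptyset}$. First I would fix the fermionic operators $\{a_j\}$ and the induced isomorphism $\Lambda$ (Eq.~\eqref{isomorphism}), so that $\rho$ is viewed as a state of $L$ distinguishable qubits. Since $\rho$ has fixed parity, Lemma \ref{lemma} applies to the multipartition $\Sigma=\{1,\ldots,L\}$: the local projective measurement that attains the MREQ can be taken symmetric, i.e.\ it must preserve the parity block structure. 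On each single-mode qubit the only rank-$1$ projective measurement compatible with the local parity is the occupation-number one, $\{\hat\Pi^j_0=a_ja_j^\dagger,\,\hat\Pi^j_1=a_j^\dagger a_j\}$, so the constrained optimum is precisely the product set $\hat\Pi^{1\,2\ldots L}_{min}$ introduced after Eq.~\eqref{proj.meas}. Hence, for fixed $\{a\}$,
\[
Q_a^{1,\ldots,L}(\rho)=S\!\left(\rho\,\big\|\,\hat\Pi^{1\,2\ldots L}_{min}(\rho)\right),
\]
where $\hat\Pi^{1\,2\ldots L}_{min}(\rho)=\sum_{\vec\ell}\langle a_{\vec\ell}|\rho|a_{\vec\ell}\rangle\,|a_{\vec\ell}\rangle\langle a_{\vec\ell}|$ is exactly the dephasing of $\rho$ in the Slater-determinant basis generated by $\{a_j\}$.

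Next I would identify this dephased state with the map $\hat\Pi^{V}$ appearing in the definition of $\M{Q}_p^{\emptyset}$ (Eq.~\eqref{quant.part.re}). A change of single-particle representation $\{a\}\to\{\tilde a\}$ is implemented by $\tilde f^\dagger_k=\sum_j U_{kj}a^\dagger_j$ (Eq.~\eqref{loc.unit}), equivalently by a unitary $V\in\M{U}(\M{F}_L)$ acting on the Slater basis. Writing the occupation-number measurement in the representation $\{\tilde a\}$ as a measurement in the rotated basis $\{V|a_{\vec\ell}\rangle\}$ yields exactly $\hat\Pi^{V}(\rho)=\sum_{\vec\ell}\langle a_{\vec\ell}|V^\dagger\rho V|a_{\vec\ell}\rangle\,|a_{\vec\ell}\rangle\langle a_{\vec\ell}|$, so that $Q_{\tilde a}^{1,\ldots,L}(\rho)=S(\rho\,\|\,\hat\Pi^{V}(\rho))$ for the corresponding $V$. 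Thus scanning all single-particle representations $\{a\}$ is the same as scanning all mode-rotation unitaries $V$, and both sides evaluate the same objective, the relative entropy to a full Slater-basis dephasing.

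Finally I would combine these facts. Taking the minimum over all representations,
\[
\min_{\{a\}}Q_a^{1,\ldots,L}(\rho)=\min_{V\in\M{U}(\M{F}_L)}S\!\left(\rho\,\big\|\,\hat\Pi^{V}(\rho)\right)=\M{Q}_p^{\emptyset}(\rho),
\]
which is the asserted equality \eqref{eq:quantumness.particles}.

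The delicate point, which I would treat carefully, is the very first step. The inner minimization in $Q_a^{1,\ldots,L}(\rho)$ ranges over \emph{all} local product measurements, which include single-qubit rotations that mix $|0_j\rangle$ with $|1_j\rangle$ and hence correspond to particle-number-nonconserving (general Bogoliubov) operations, a strictly larger class than the mode rotations of Eq.~\eqref{loc.unit}. It is exactly Lemma \ref{lemma} that rescues the argument: it guarantees the optimum is already attained on the symmetric subfamily, and for parity on a single mode that subfamily collapses to the unique occupation-number basis, so no finer local optimization inside a fixed representation can do better and relaxing to symmetric measurements loses nothing. Granting this reduction, the two minimizations coincide because their feasible sets (all Slater bases, parametrized by $V$) and their objective ($S(\rho\,\|\,\cdot)$ of a Slater-basis dephasing) are identical on the two sides; I would also verify that the projectors onto configurations of parity opposite to that of $\rho$ annihilate the state, consistent with the ``preserve or annihilate'' alternative of the Lemma.
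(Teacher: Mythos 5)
Your proof is correct and follows essentially the same route as the paper's: invoke Lemma \ref{lemma} to collapse the inner minimization over local product measurements to the unique parity-preserving occupation-number dephasing $\hat\Pi^{1\,2\ldots L}_{min}$, then identify the outer minimization over single-particle representations with the minimization over Bogoliubov unitaries $V$ defining $\M{Q}_p^{\emptyset}$ in Eq.~\eqref{quant.part.re}. Your explicit treatment of the ``delicate point'' (that generic local qubit rotations are number-nonconserving and only the Lemma justifies discarding them) is a welcome clarification of a step the paper leaves implicit, but it is the same argument.
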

\begin{proof}
Given the relative entropy of quantumness for the modes,
perform a measurement over all of
them locally, in a representation $\Lambda$: 
\[Q^{1,\ldots ,L}_{a}(\rho) =\min_{\Pi^{a}}
S(\rho\Vert \Pi^{a}(\rho))=S(\rho\Vert
\Pi^{a}(\rho)),\]
where $\Pi^{a}(\rho)$ 
is the local measurement map over all modes in the
representation
$a$. By means of Lemma \ref{lemma},  the minimization over
all projective measurements
in this representation is restricted to only one projective
measurement map $\Pi^{a}$.
The projective measurement map acts on $\rho$ as: 
\begin{align*}
\Pi^{a}(\rho)&=\hat\Pi_{1}^{a}\otimes\cdots\otimes
\hat\Pi_{L}^{a}(\rho)\\
& = \sum_{\vec{l}}\hat\Pi_{\vec{l}} \rho \hat\Pi_{\vec{l}}\\
& =  \sum_{\vec{l}} \ketbra{a_{\vec{\ell}}}{a_{\vec{\ell}}} \rho \ketbra{a_{\vec{\ell}}}{a_{\vec{\ell}}},
\end{align*}
where $\ket{a_{\vec{\ell}}}=a_{\vec{l}}\ket{vac}=a_1^{l_1\dagger}\cdots a_L^{l_L\dagger}\ket{vac}$.
The transformation from the representation $\Lambda$ to  another
$\Lambda'$ can be performed
by means of a Bogoliubov transformation $V$, thus:
\[ f_{\vec{l}}^{\dagger}= V  a_{\vec{l}}^{\dagger},  \]
where ${V}$ is a unitary
operation on the particles. Therefore
minimizing  the quantumness of modes over all representations amounts 
to minimize it over all Bogoliubov
transformations $V$: 
\[
\min_{\Pi^{a}} S(\rho\Vert \Pi^{a}(\rho)) = \min_V
S(\rho\Vert \Pi^{V}(\rho)),
\]
where \[ \Pi^{V}(\rho) = \sum_l \bra{a_{\vec{\ell}}}V^{\dagger} \rho
V \ket{a_{\vec{\ell}}} \ketbra{a_{\vec{\ell}}}{a_{\vec{\ell}}}, \]
which  results in the measure of quantumness of correlations defined
in Eq.\ref{quant.part.re}.
\end{proof}


Theorem \ref{teo1} determines a new approach for the
quantification of
correlations between indistinguishable 
particles. 
Motivated by this result, it would be interesting to 
study the minimum of 
the single-mode correlations, described by the One-Way Work
Deficit in Eq.\eqref{defi2}, over all possible single-particle
representations. Recently,  Gigena and Rossignoli explored this idea
in entangled pure states with parity symmetry \cite{gigena15},  
and also investigated  the one-body information loss \cite{gigena16}.
 By means of Lemma \ref{lemma}, one can see that these two notions are indeed the 
one-way work deficit for fermions, as we describe below. 
A direct implication of Lemma \ref{lemma} is the analytical
computation of the One-Way Work Deficit for a single-particle mode: 
 \begin{equation}\label{quant.mode}
 Q^{j}_{a}(\rho) =
S \left( \ad{j} a_{j} \rho \ad{j} a_{j} + a_{j} \ad{j} \rho
a_{j} \ad{j} \right)- S(\rho),
 \end{equation}
where now we write  $Q^{j}_{a} \equiv Q^{(j)} $, 
in order to make clear that we deal with the quantumness of correlations of
a single-particle mode $a_j$. In particular, if the quantum state is
pure ($| \psi \rangle$), the One-Way Work Deficit is given by,
 \begin{equation}\label{av}
Q^{j}_{a}(\ket{\psi}) = H \left( \langle \ad{j} a_{j}
\rangle, \langle a_{j} \ad{j} \rangle \right).
 \end{equation}
Summing   the single-particle correlation of all modes,
$\sum_{j=1}^L Q^{j}_{a}({\rho})$,
 gives us  the  correlation in this particular basis of modes,
with minimum correlation in its single-particle modes, defining
in this way the {\it One-Body Quantumness of Correlations}:
\begin{equation}\label{quant.part}
\mathcal{Q}_{sp}(\rho) = \min\limits_{\{a\}}\left(
\sum_{j=1}^L Q^{j}_{a}(\rho) \right).
\end{equation}
As the quantifier 
in Eq.\eqref{defi2}, the One-Body Quantumness of Correlations  
is obtained by means of a quantifier of quantumness of
correlations
based on the tensor product construction of distinguishable
systems.
This is of paramount  importance, for  it circumvents any controversy 
about correlations
in Fock space. 
As discussed above, for single-particle modes partitioning there
exists only one projector that minimizes the local disturbance for
symmetric states,
then it is simple to show the equivalence of the quantumness of
correlations defined via one-way work deficit in
Eq.\eqref{defi2}, 
and the entanglement measure proposed in Ref.\cite{gigena15} for
pure states.
\begin{theorem} \label{teo2}
For a  pure state $\ket{\psi}\in\M{H}^2\otimes\M{H}^{2(L-1)}$,
the one-body quantumness of correlations and the {one-body 
entanglement} are given by:
\begin{equation}\label{gigena.theorem}
S_{sp}(\ket{\psi}) = \mathcal{Q}_{sp}(\ket{\psi}) =
\min\limits_{\{a\}}\left( \sum_{j=1}^L
Q^{j}_{a}(\ket{\psi}) \right),
\end{equation}
where $\M{E}_{sp}$ is the entropy of entanglement for pure states 
of particles, as proposed in Ref.\cite{gigena15}.
\end{theorem}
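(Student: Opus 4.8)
The plan is to show that for a pure state $\ket{\psi}$ with fixed parity symmetry, the one-body quantumness of correlations defined in Eq.~\eqref{quant.part} coincides with the one-body entanglement entropy $S_{sp}$ of Ref.~\cite{gigena15}. The key observation is that Lemma~\ref{lemma} completely fixes the optimal local projective measurement for each single-particle mode to be the unique symmetric pair $\{a_j a_j^\dagger,\, a_j^\dagger a_j\}$ given in Eq.~\eqref{proj.meas}, so that no minimization over projectors remains and the single-mode One-Way Work Deficit reduces to the closed form of Eq.~\eqref{av}, namely $Q^j_a(\ket{\psi}) = H(\langle a_j^\dagger a_j\rangle, \langle a_j a_j^\dagger\rangle)$.

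\textbf{Main steps.} First I would recall the structure of the one-body entanglement entropy of Ref.~\cite{gigena15}, which for a pure state is built from the single-particle reduced density matrix $\gamma_{ij} = \langle \psi| a_j^\dagger a_i|\psi\rangle$; its entanglement measure is the sum of binary entropies $\sum_j H(\lambda_j, 1-\lambda_j)$ of the eigenvalues $\lambda_j$ (the natural-orbital occupation numbers) of $\gamma$. Next I would diagonalize $\gamma$ by a Bogoliubov transformation $V$, as in Eq.~\eqref{loc.unit}, passing to the natural-orbital basis $\tilde f_k^\dagger = \sum_j V_{kj} a_j^\dagger$; in this basis $\langle \tilde f_k^\dagger \tilde f_k\rangle = \lambda_k$. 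The third step is to evaluate $\sum_j Q^j_a(\ket{\psi})$ in an arbitrary single-particle representation using Eq.~\eqref{av}, obtaining $\sum_j H(\langle a_j^\dagger a_j\rangle, \langle a_j a_j^\dagger\rangle)$, and then minimize over all representations $\{a\}$ as prescribed in Eq.~\eqref{quant.part}. I would argue that the minimizing representation is exactly the natural-orbital basis: concavity of the binary entropy together with the fact that the occupation numbers $\langle a_j^\dagger a_j\rangle$ are diagonal entries of $\gamma$, whose spectrum is fixed, forces the sum to be minimized when $\gamma$ is diagonal (a Schur-concavity / majorization argument, in the same spirit as the concavity bound already used in Eqs.~\eqref{eq-rov8}--\eqref{eq-rov9} of the worked example). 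This identifies $\min_{\{a\}} \sum_j Q^j_a = \sum_k H(\lambda_k, 1-\lambda_k) = S_{sp}(\ket{\psi})$.

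\textbf{Main obstacle.} The delicate point will be justifying the minimization step rigorously, i.e.\ that the natural-orbital basis genuinely attains the minimum of $\sum_j H(\langle a_j^\dagger a_j\rangle, 1-\langle a_j^\dagger a_j\rangle)$ over all Bogoliubov transformations. The diagonal entries of $\gamma$ in any basis are majorized by its eigenvalue vector (by Schur--Horn), and since $x \mapsto H(x,1-x)$ is concave, the sum of binary entropies is Schur-concave, hence minimized when the diagonal coincides with the (sorted) spectrum — precisely the natural-orbital basis. I would need to verify that the parity symmetry ensures $\gamma$ is the relevant object and that Lemma~\ref{lemma} legitimately removes the projector-optimization at each mode; this is what lets the abstract relative-entropy minimization collapse to the tractable occupation-number expression. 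Once this majorization argument is in place, the equality $S_{sp} = \mathcal{Q}_{sp}$ follows directly.
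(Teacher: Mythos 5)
Your proposal is correct and follows essentially the same route as the paper's proof: Lemma~\ref{lemma} fixes the optimal projectors to the symmetric pair $\{a_j a_j^{\dagger}, a_j^{\dagger} a_j\}$, the quantumness in each single-particle representation collapses to $\sum_j H\left(\langle a_j^{\dagger} a_j\rangle, \langle a_j a_j^{\dagger}\rangle\right)$, and the minimum over representations is then identified with $S_{sp}$. The only difference is that you spell out the Schur--Horn majorization and Schur-concavity argument showing the natural-orbital (diagonalizing) basis attains that minimum, a step the paper leaves implicit by deferring to Ref.~\cite{gigena15}; this makes your version more self-contained than the paper's.
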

\begin{proof}
Considering the optimal projective measurement described by the map $\Pi_{j}^{a}$ with 
projectors $\{a_ja_j^{\dagger},a_j^{\dagger}a_j\}$, the quantumness of correlations in this mode representation is:
\[\mathcal{{Q}}_{sp}^a(\ket{\psi}) \equiv \sum_j Q_a^j(\ket{\psi}) = \sum_j
S(\Pi^a_{j}(\ket{\psi})),\]
where  $\Pi_{j}^{a}(\psi)= \langle\ad{j}a_j\rangle \ketbra{\phi_o}{\phi_o} +
\langle a_j \ad{j}
\rangle|{\phi_e}\rangle \langle
{\phi_e}|$ is a projective measurement described in
Eq.(\ref{proj.meas}). As this measurement is composed by rank-1
projectors, the states $\ket{\phi_o}$ and $\ket{\phi_e}$ are
orthogonal, therefore:
\begin{align}
\mathcal{{Q}}_{sp}^a(\ket{\psi}) = \sum_j H(\langle\ad{j}a_j\rangle,\langle a_j\ad{j}\rangle).
\end{align}
Taking the minimization over all isomorphisms $a: \M{F}_L
\leftrightarrow \M{H}^{2\otimes L}$, we obtain
\[ \min_{a} \mathcal{{Q}}_{sp}^a(\ket{\psi}) =\min_{a} \sum_j H(\langle\ad{j}a_j\rangle,\langle a_j\ad{j}\rangle)=S_{sp}(\ket{\psi}), \]
where $S_{sp}$ is the von Neumann entropy of the single
particles, as discussed in Ref.\cite{gigena15}, and quantifies
the entanglement and quantumness of correlations
of the single particles.
\end{proof}

We defined previously two different quantifiers for the
quantumness
of correlations, namely, $\mathcal{Q}_{sp}(\rho)$ 
and $\M{Q}^{\emptyset}_p(\rho)$. It is now important to show the 
interplay between these two quantifiers. Indeed this can be 
obtained using simple tools of  the quantum information formalism.

\begin{theorem}
For a state $\rho\in\M{D}({\M{F}_L})$, the zero-way work
deficit for
identical particles $\M{Q}_p^{\emptyset}$ is upper bounded by the 
one-body quantumness of correlations $\M{Q}_{sp}(\rho)$:
\begin{equation}
\mathcal{Q}_p^{\emptyset}(\rho) \leq \mathcal{Q}_{sp}(\rho).
\end{equation}
\end{theorem}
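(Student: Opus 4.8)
The plan is to reduce the statement to a single-representation inequality and then exploit that the full occupation-basis measurement factorizes into commuting single-mode dephasings. First I would fix an arbitrary single-particle representation $\{a\}$ and write the full local map as the composition $\Pi^{a} = \Pi^{a}_{L}\circ\cdots\circ\Pi^{a}_{1}$, where each $\Pi^{a}_{j}$ is the single-mode dephasing with the canonical projectors $\{a_j a_j^{\dagger},\,a_j^{\dagger}a_j\}$ of Eq.\eqref{proj.meas}. Since projectors acting on distinct modes commute, the composition is independent of the ordering and each $\Pi^{a}_{j}$ commutes with every other one. I would record two elementary facts. For a complete dephasing one has $S(\rho||\Pi(\rho)) = S(\Pi(\rho)) - S(\rho)$, which follows from the definition $S(\rho||\sigma) = -S(\rho)-\Tr(\rho\log\sigma)$ together with $\Tr(\rho\log\Pi(\rho)) = \Tr(\Pi(\rho)\log\Pi(\rho))$, valid because $\log\Pi(\rho)$ is block diagonal with respect to the measurement projectors. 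Consequently $Q^{j}_{a}(\rho) = S(\Pi^{a}_{j}(\rho)) - S(\rho)$ as in Eq.\eqref{quant.mode}, while $\Pi^{a}$ is exactly one of the full Slater-basis measurements appearing in Eq.\eqref{quant.part.re}, so that $\M{Q}^{\emptyset}_p(\rho) = \min_{\{a\}}\left[S(\Pi^{a}(\rho)) - S(\rho)\right]$ (this is the content of Theorem \ref{teo1}) and in particular $\M{Q}^{\emptyset}_p(\rho) \le S(\Pi^{a}(\rho)) - S(\rho)$ for \emph{every} representation $\{a\}$.

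The core step is the per-representation inequality
\[ S(\Pi^{a}(\rho)) - S(\rho) \;\le\; \sum_{j=1}^{L}\left[S(\Pi^{a}_{j}(\rho)) - S(\rho)\right], \]
which I would establish by telescoping. Setting $\rho_0 = \rho$ and $\rho_k = \Pi^{a}_{k}(\rho_{k-1})$, so that $\rho_L = \Pi^{a}(\rho)$, the left-hand side equals $\sum_{k=1}^{L}\left[S(\rho_k)-S(\rho_{k-1})\right] = \sum_{k=1}^{L} S(\rho_{k-1}||\rho_k)$, again by the dephasing identity. Writing $\M{N}_{k-1} = \Pi^{a}_{k-1}\circ\cdots\circ\Pi^{a}_{1}$, we have $\rho_{k-1} = \M{N}_{k-1}(\rho)$ and, by commutativity, $\rho_k = \M{N}_{k-1}\left(\Pi^{a}_{k}(\rho)\right)$. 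Monotonicity of the relative entropy under the CPTP map $\M{N}_{k-1}$ then gives $S(\rho_{k-1}||\rho_k) \le S(\rho||\Pi^{a}_{k}(\rho)) = S(\Pi^{a}_{k}(\rho)) - S(\rho) = Q^{k}_{a}(\rho)$, and summing over $k$ yields the displayed bound.

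Finally I would assemble the estimate. Let $\{a^{\star}\}$ be the representation attaining the minimum in $\M{Q}_{sp}(\rho)$ of Eq.\eqref{quant.part}. Bounding $\M{Q}^{\emptyset}_p(\rho)$ by its value at $\{a^{\star}\}$ and then invoking the per-representation inequality at $\{a^{\star}\}$,
\[ \M{Q}^{\emptyset}_p(\rho) \le S(\Pi^{a^{\star}}(\rho)) - S(\rho) \le \sum_{j=1}^{L} Q^{j}_{a^{\star}}(\rho) = \M{Q}_{sp}(\rho), \]
which is the claim.

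I expect the main obstacle to be the middle step: one must be careful that the single-mode dephasings genuinely commute and that the intermediate partially-dephased states $\rho_{k-1} = \M{N}_{k-1}(\rho)$ are legitimate CPTP images of $\rho$, so that the data-processing inequality can be applied term by term; the telescoping then collapses the chain without losing track of the subtracted $S(\rho)$ terms. The dephasing identity $S(\rho||\Pi(\rho)) = S(\Pi(\rho)) - S(\rho)$ should be verified once at the outset, and the remaining manipulations — in particular the fact that $\Pi^{a}$ is admissible in the minimization defining $\M{Q}^{\emptyset}_p$ — are bookkeeping built on Theorem \ref{teo1} and Eq.\eqref{quant.mode}.
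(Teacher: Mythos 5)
Your proof is correct, but it takes a genuinely different route from the paper's. Both arguments open the same way: the dephasing identity $S(\rho||\Pi^{a}(\rho)) = S(\Pi^{a}(\rho)) - S(\rho)$, and the fact that the full Slater-basis dephasing is admissible in the minimization of Eq.~\eqref{quant.part.re}. From there the paper goes classical: it writes $S(\Pi^{a}(\rho)) = H\{p(\vec{l})\}$, the Shannon entropy of the joint occupation distribution, bounds it by $\sum_j H(p_j)$ via the chain rule and positivity of conditional mutual information, and then compares $\sum_j H(p_j) - S(\rho)$ with $\sum_j Q^{j}_{a}(\rho)$ using only $S(\Pi^{a}_{j}(\rho)) \geq H(p_j)$. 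You instead telescope the composition of commuting single-mode dephasings and apply the data-processing inequality to each step. The difference is not merely stylistic. The paper's final step, as written, loses track of the fact that $\sum_j Q^{j}_{a}(\rho)$ subtracts $S(\rho)$ a total of $L$ times, and its intermediate claim $\sum_j H(p_j) - S(\rho) \leq \sum_j Q^{j}_{a}(\rho)$ is false in general: for the occupation-diagonal state $\rho = \frac{1}{2}\left(\ketbra{vac}{vac} + \ketbra{1\cdots 1}{1\cdots 1}\right)$, evaluated in its own occupation basis, every $Q^{j}_{a}(\rho)=0$ while $\sum_j H(p_j) - S(\rho) = (L-1)\log 2$; the theorem survives for such states only because the preceding subadditivity step is itself very lossy there. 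Your telescoped data-processing argument proves directly, for every representation, the inequality that the assembly actually requires, namely $S(\Pi^{a}(\rho)) + (L-1)S(\rho) \leq \sum_{j} S(\Pi^{a}_{j}(\rho))$, equivalently $S(\rho||\Pi^{a}(\rho)) \leq \sum_j Q^{j}_{a}(\rho)$. So your route does not just differ from the paper's; it closes a hole in it.

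One attribution should be corrected: the identity $\M{Q}_p^{\emptyset}(\rho) = \min_{\{a\}}\left[S(\Pi^{a}(\rho)) - S(\rho)\right]$ is not the content of Theorem~\ref{teo1}; it is just the definition in Eq.~\eqref{quant.part.re} combined with the dephasing identity. Theorem~\ref{teo1} concerns the MREQ, whose internal minimization over local projectors collapses to the canonical projectors $\{a_j a_j^{\dagger}, a_j^{\dagger}a_j\}$ only under the parity-symmetry hypothesis of Lemma~\ref{lemma}, whereas the present theorem (with $Q^{j}_{a}$ and $\M{Q}_{sp}$ read as the canonical-dephasing expressions of Eqs.~\eqref{quant.mode} and~\eqref{quant.part}) carries no symmetry assumption. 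Your argument in fact never uses more than the definition, so this is purely a matter of citation, not a gap.
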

\begin{proof}
Consider the density matrix $\rho$,  represented in the single
particle basis
$\{ a_j^{k_j} \}_{j=1}^{L}$: 
\begin{equation}
\rho = \sum_{\vec{k},\vec{k'}} \omega_{\vec{k},\vec{k'}}
a_1^{k_1\dagger}\cdots a_L^{k_L\dagger}\ketbra{vac}{vac}
a_1^{k'_1}\cdots a_L^{k'_L}, 
\end{equation}
where $\vec{k} = (k_1, \ldots , k_L)$.
We can define the set of projective measurements
$\{\hat\Pi_{\vec{k}}\}$ such that:
\begin{align*}
\hat\Pi_{\vec{k}} &= a_1^{k_1\dagger}\cdots
a_L^{k_L\dagger}\ket{vac}\bra{vac}
 a_L^{k_L}\cdots a_1^{k_1}\\
 & = \ketbra{a_{\vec{k}}}{a_{\vec{k}}}, 
\end{align*}
where $\ket{a_{\vec{k}}}=a_1^{k_1\dagger}\cdots
a_L^{k_L\dagger}\ket{vac}$.
Now perform the local dephasing on the state:
\begin{align}
\Pi(\rho) &= \sum_{\vec{l}} \hat\Pi_{\vec{l}}\rho \hat\Pi_{\vec{l}}\\
& = \sum_{\vec{l}}
\Tr\left(\ketbra{a_{\vec{\ell}}}{a_{\vec{\ell}}}{\rho}\right)
\ketbra{a_{\vec{\ell}}}{a_{\vec{\ell}}}.
\end{align}
Let $p(\vec{l}) =
\Tr\left(\ketbra{a_{\vec{\ell}}}{a_{\vec{\ell}}}{\rho}\right)$ be
the
probability to find the modes occupied in $\vec{l} =
(l_1,l_2,\dots,l_L)$
configuration, for $l_j=\{0,1\}$.  The relative
entropy of $\rho$ and $\Pi(\rho)$ is:
\begin{align}
S(\rho||\Pi(\rho))& = S(\Pi(\rho)) - S(\rho) \\ \label{H}
& = H\{p(\vec{l})\}- S(\rho) ,
\end{align} 
where $H\{p(\vec{l})\}$ is the Shannon joint entropy of the
probabilities
$ p(\vec{l}) = p(l_1,l_2,...,l_L)$. 

Now let us obtain a relation between joint entropy and the
Shannon entropy. With the chain rule of the joint entropy for
$n$ random variables
$X_1,\dots,X_n$:
\begin{equation}
H(X_1,\dots,X_n) = \sum_{i=1}^n H(X_i|X_1,\dots,X_{i-1}), 
\end{equation}
and  the positivity of the conditional mutual information:
\begin{equation}
I(X_i|X_1,\dots,X_{i_1}) = H(X_i) - H(X_i|X_1,\dots,X_{i_1})\geq
0,
\end{equation}
we have:
\begin{equation}\label{inq.jent.}
H(X_1,\dots,X_n) \leq \sum_{i=1}^n H(X_i). 
\end{equation}
Then from  Eq.(\ref{H}) we obtain: 
\begin{align}
S(\rho||\Pi(\rho)) &\leq \sum_{j=1}^L H(p_j) -  S(\rho)\\
& \leq \sum_{j=1}^L  S^{(j)}(\rho||\Pi(\rho)),
\end{align}
where $H\{p_{j}\} = - p_0^j\log p_0^j- p_1^j\log p_1^j $, and $p_1^j(p_0^j)$ is  the probability to find a particle(hole) in
mode $j$.
In the second inequality above, we used that: $ S(\hat \Pi^{(j)}(\rho)) =
H\{p_{j}\} + \sum_l p^{j}_l S(\sigma_l^{L-j})\geq H\{p_{j}\}
$, 
in conjunction with Eq.(\ref{quant.mode}).  
Therefore,  as the last inequality holds for any projective
measurement, it
must also  hold for the optimal one, which proves the statement. 
\end{proof}

This result makes evident the nature of the construction of
these two quantifiers for the quantumness
of correlations. 
$\mathcal{Q}_{sp}(\rho)$ is based on the single particle mode
quantumness of correlations, 
 defined in Eq.\eqref{quant.mode}, which takes into account
the average
of binary entropies corresponding to the occupation of
particles/holes in
 each mode. On the other hand,  $\M{Q}^{\emptyset}_p(\rho)$  
is related to the joint probability for the occupation of
particles/holes in each mode.



\section{Activation Protocol}\label{act.prot.sec}
In this section, we characterize the class of fermionic states without quantumness of correlations,   by means of an activation protocol for a system of L modes. 
We show that the two notions discussed in the previous section (Eq.\eqref{eq:quantumness.particles} and
Eq.\eqref{quant.part})
 share the same set of states without quantumness of correlations.

A measurement process can be described by a unitary
interaction between the measurement
apparatus and the quantum system, followed by a projective
measurement on the apparatus.
Considering a system in  the state $\rho_{{S}}=\sum_k
\lambda_k \ketbra{k}{k}
\in\M{D}({H}_{{S}})$, the global initial state   
for  system/measurement-apparatus can be written as 
$\rho_{{S}:{M}} = \rho_{{S}}\otimes \ketbra{0}{0}_{{M}}$. The 
interaction between the system and the apparatus ancillary state
will
be performed by a unitary evolution:  
$U_{{S}:{M}}\in\M{U}({H}_{{S}}\otimes\mathcal{H}_{{M}})$, such
that
$Tr_{{M}}[U_{{S}:{M}} \rho_{{S}:{M}} U_{{S}:{M}}^{\dagger}]
= \sum_l\hat \Pi_l \rho_{{S}} \hat\Pi_l^{\dagger}$. 
A unitary operation satisfying this condition is given by:
\begin{equation}
U_{{S}:{M}}\ket{k}_{{S}}\ket{0}_{{M}} =
\ket{k}_{{S}}\ket{k}_{{M}},
\end{equation}
where $\{ \ket{k}\}$ is an orthonormal basis in
$\mathcal{H}_{{S}}$.
If the orthogonal basis $\{\ketbra{k}{k}\}$ is the canonical
one, this
interaction is the Cnot gate. Although this kind of interaction
creates only classical correlations for a global measurement process,
local measurements can create
entanglement between system and measurement apparatus
\cite{pianiprl,streltsov11}. The quantumness of correlations of
the system
can be obtained by means of the minimum amount of entanglement
created by the interaction:
\begin{equation}\label{act.prot}
Q_{E}(\rho_S) = \min_{V_{S}} E(\tilde{\rho}_{S:M}),
\end{equation}
where $V_S$ is a unitary, which sets the measurement basis.
  $\tilde{\rho}_{S:M} =
U_{S:M}(V_S\otimes\id_M)(\rho_S\otimes\ketbra{0}{0}_M)
(V_S^{\dagger}\otimes\id_M)U_{S:M}^{\dagger}$
is the result of the interaction. 
For each entanglement monotone $E$, it results in a different
quantifier of quantumness
of correlations $Q_{E}(\rho_S)$
\cite{pianiprl,streltsov11,taka,iemini14}.

\begin{figure}
\centering 
\includegraphics[scale=0.3]{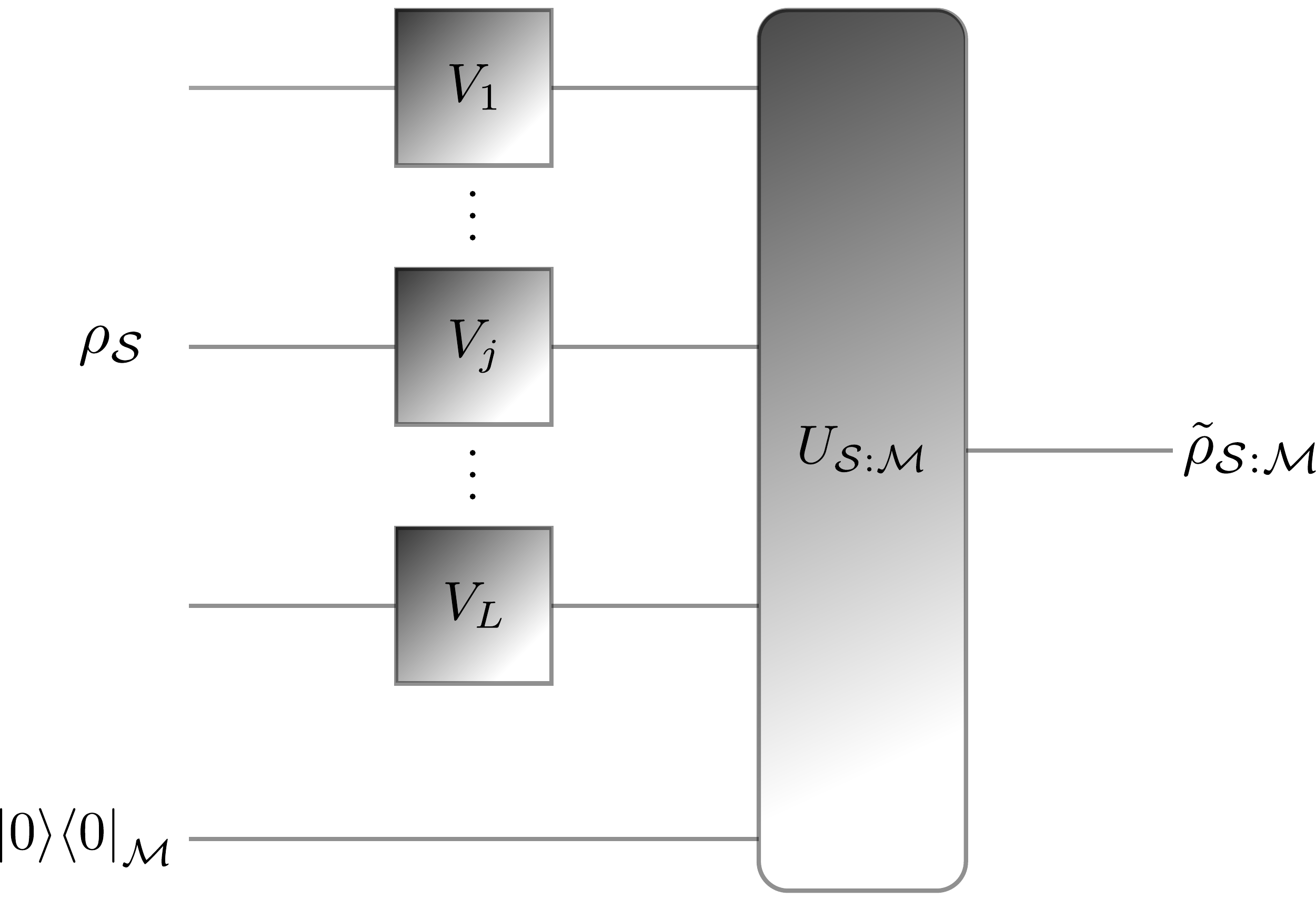} 
\caption{(Color online) Activation Protocol for an L-modes system.}\label{fig1}
\end{figure}

Fig.\ref{fig1} pictures the activation protocol for an L-modes 
system,  described by the density matrix
$\rho_S\in\M{D}(\M{H}_1^2\otimes\cdots\otimes\M{H}_L^2)$, where
the unitary operation $V_j$
represents rotation on mode $j$, and $U_{S:M}$ is a Cnot
operation with the
system as control and apparatus as target. The output state in
the quantum circuit is:
\[\tilde{\rho}_{S:M}=U_{S:M}(V_{S}\otimes\id_M)(\rho_S\otimes\ketbra{0}{0}_M)
(V_{S}^{\dagger}\otimes\id_M)U_{S:M}^{\dagger}, \]
where $V_{S} = V_1\otimes\cdots\otimes V_L$. The post
measurement state, resulted from the interaction between system
and
measurement apparatus is: 
\begin{equation}
\tilde{\rho}_{S} = \sum_{\vec{l}} \hat\Pi_{\vec{l}} V_S \rho_S
V_S^{\dagger}\hat\Pi_{\vec{l}},
\end{equation}
with $\hat\Pi_{\vec{l}}=\hat\Pi_{l_1}\otimes\cdots\otimes\hat\Pi_{l_L}$. As
aforementioned, 
the action of the unitary operation $V_S$ determines  the  projective measurement basis.
Therefore, from
Lemma \ref{lemma}, the unitary 
operation on each mode $j$ must preserve the symmetries of the
state in order to
minimize the local disturbance. Thus,  for one-body systems with parity symmetry,  
there exists
only one unitary operation $V_j=\id_j$ on each mode $j$. Therefore, 
the post measurement state, for one-body systems, is simply: 
\begin{equation}
\tilde{\rho}_{S} = \sum_{\vec{l}}\hat \Pi_{\vec{l}} \rho_S
\hat\Pi_{\vec{l}},
\end{equation}
with  $\hat\Pi_{l_j} = \{a_j^{\dagger}a_j,a_j a_j^{\dagger} \}$ in a
given
representation of modes.

We  will  use the approach of activation protocol to discuss
 the
classically correlated states.  As shown 
in Ref.\cite{pianiprl}, a quantum state $\chi$ is classically 
correlated if and only if there exists a unitary $V_S$ such
that:
\begin{equation}
\chi_S = \tilde{\chi}_{S} = \sum_{\vec{l}} \hat\Pi_{\vec{l}} V_S
\chi_S V_S^{\dagger}\hat\Pi_{\vec{l}},
\end{equation}
which immediately holds for the activation protocol of
$L$-modes
represented in Fig.\ref{fig1}.
Actually, we can learn about the set of classically correlated
states according to the 
one-body quantumness of particles,
by means of the representation of
modes.
As the set of classically correlated states of particles
satisfies $\M{Q}_{sp}(\chi)=0$, for all $\chi$ in this set,
there must exist an isomorphism $\Lambda: \M{F}_L \leftrightarrow
\M{H}^{2\otimes L}$ of one-body particles such that $\sum_j
\mathcal{Q}_{sp}^{(j)}(\chi)=0$.
This equality holds if and only if:
\[\mathcal{Q}_{sp}^{(j)}(\chi)=0, \qquad \forall \, j=1,\cdots,N. \]
Therefore, for all $j$, there is a projective measurement $\Pi^{(j)}$
such that $\Pi^{(j)}(\chi) = \chi$. Thus a quantum state
$\chi\in\M{D}(\M{F}_L)$ is classically correlated,
if and only if, there exist projective measurements in each mode
$\Pi^{(1)}\otimes\cdots\otimes\Pi^{(L)}(
\chi) = \chi$, such that
\[ \chi = \sum_{\vec{l}} p(\vec{l})
\hat\Pi_{l_1}\otimes\cdots\otimes\hat\Pi_{l_L}, \]
where $\vec{l} = (l_1,\ldots,l_L)$ and $l_j = 0,1$.
 From the
definition of quantumness of correlations of particles in
Eq.(\ref{quant.part}), and the
isomorphism in Eq.(\ref{isomorphism}), we can write the
projectors as $\hat\Pi_{\vec{l}} = a_1^{l_1\dagger}\cdots
a_L^{l_L\dagger}\ket{vac}\bra{vac}a_L^{l_L}\cdots a_1^{l_1}$,
and  $\hat\Pi_{\vec{l}}=\hat\Pi_{l_1}\otimes\cdots\otimes\hat\Pi_{l_L}$,
therefore the classically correlated states of particles can be
written as:
\begin{equation}
\chi = \sum_{\vec{l}} p(\vec{l}) a_1^{l_1\dagger}\cdots
a_L^{l_L\dagger}\ket{vac}\bra{vac}a_L^{l_L}\cdots a_1^{l_1}.
\end{equation}
This results is in agreement with the set of classically
correlated states obtained in Ref.\cite{iemini14}, by means of
the activation protocol \cite{pianiprl,streltsov11}, for the zero
way work deficit
introduced in Eq.(\ref{quant.part.re}). As a matter of fact,  as the set of
classically correlated states of particles is independent of the isomorphism,
we can write:
 \begin{equation}
 \M{Q}_p^{\emptyset}(\chi) = \mathcal{Q}_{sp}(\chi)=0.
 \end{equation}
It is important to note that the classicality of correlations of
particles, for a given state, does not imply in a classically
correlated state for a given mode representation. In other
words, the system of particles can be classically correlated,
whereas there exist modes that are quantum correlated. The requirement,  
for the particles be classically correlated,
in Eq.\eqref{quant.part}
is the existence of at least 
one representation $\Lambda: \M{F}_L \leftrightarrow
\M{H}^{2\otimes L}$ such that the modes are classically
correlated \cite{zanardprl}.

\section{Dissipative System}\label{dissip.system}

In this section,  we study a physical system and its quantumness of correlations
 in order to illustrate the previous discussions. We investigate 
a purely dissipative system, whose stationary states exhibit
interesting topological non-local correlations.
 The motivation for the study of a dissipative system stems from  the fact
 that its dynamics tends to mix the quantum state, allowing 
 us to study the quantumness of correlations beyond entanglement, since for
 pure states such notions usually overlap.
 Furthermore, we consider a system which conserves the total number of particles, 
 as we will describe in more detail later, and in this way we can
 use our previous results for the quantumness of symmetric states.    
 Let us give a brief overview of  dynamics in open systems and
  present the physical setting under scrutiny.

In general, evolutions in dissipative open quantum systems 
tend to annihilate the quantum correlations present in the 
system, leading to steady states described by trivial
mixed states.
 There are cases, however, in which the
many-body density matrix is driven towards a pure
steady state, $\rho_f = \ketbra{\psi_D}{\psi_D}$, commonly
called as dark
states in quantum optics.
 Recently, theoretical and experimental studies 
 \cite{Iemini_Dissipation_2016,diehl08,verstraete09,diehl11}
 have focused on how to properly engineer the environment such
that, in the long-time limit, 
 it drives the system into
certain desired quantum states. 
In particular, we study here the dissipative number conserving 
system as proposed in \cite{Iemini_Dissipation_2016},
 consisting of a single, or two-leg ladder, 
 properly coupled to the environment. It was shown
that such setup leads to topological superconductors as steady
states,
 exhibiting non-local edge correlation and Majorana zero modes,
  with promising applications for topologically protected 
  quantum memory and computing \cite{review.Nayak_2008}.

Let us formally describe the system we will study. The
time
evolution of a system coupled to
 a Markovian reservoir (memoryless reservoir) is described by
 a master equation, cast in the following form,
\begin{eqnarray}
\frac{\partial \rho}{\partial t} = \hat{\mathcal{L}}[\rho] =
-i\, \comm{\hat H}{\rho} + J \sum_j \left( \hat L_j \rho \hat
L^{\dagger}_j -
\frac{1}{2}\anti{\hat L^{\dagger}_j \hat L_j}{\rho}\right),
\nonumber \\
\label{master.equation}
\end{eqnarray}
where $\rho$ is the density matrix, $\hat{\mathcal{L}}$ is the
Liouvillian of the evolution, $\hat H$ is the Hamiltonian of the
system,
and $\hat L_j$ are the Lindblad operators. Considering a purely dissipative evolution ($\hat H=0$), possible pure dark states in
the
 system are mathematically
 related to zero modes of the Liouville operator. More
precisely, dark states are zero modes shared by all Lindblad
operators:
\begin{equation}
\hat L_j  \ket{\psi_D} = 0, \quad \forall j.
\label{dark.state.def}
\end{equation}
Their existence, however, is not always guaranteed. 
 We study a one-dimensional 
fermionic system with $L$ sites, evolved by the following
number-conserving
 Lindblad operators:
\begin{equation}
\hat L_j =  (\ad{j} + \ad{j+1}) (a_j - a_{j+1}),
\end{equation}
where $a^{(\dagger)}_j$ are 
the fermionic annihilation (creation) operators at the site $j$,
and we consider
open boundary conditions ($j=1,...,L-1$).
In this setting,  the dark states are $p$-wave superconductors
 with fixed number of particles.
 
Let us focus now in the simpler non-trivial fermionic system which can
 present quantum correlations beyond the mere exchange statistics,
 \textit{i.e.,} a system with 
 $L=4$ sites and $N=2$ particles. Even for such a small setting,
 the dissipative system is already able to create superconducting correlations between the
  particles. Interestingly, the exact expression for such steady state in real space 
  was studied not only in the realm of dissipative systems 
   \cite{Iemini_Dissipation_2016}, but also as ground states of a closed Hamiltonian \cite{iemini_Maj_H}, 
 and is given by
   the equal weighted superposition
  of all possible configurations of its $N$ particles in the $L$ sites; precisely,
\begin{eqnarray*}
\ket{\psi_D} = \frac{\left(\ad{1}\ad{2} + 
\ad{1}\ad{3} + \ad{1}\ad{4} + 
 \ad{2}\ad{3} + \ad{2}\ad{4} + \ad{3}\ad{4} \right)}{\sqrt{6}} \ket{vac}. \nonumber \\
\end{eqnarray*} 

Since the above setting conserves the total number of particles, we can use our results for the quantumness
 in symmetric states.
Thus, we study the dynamics for an initial uncorrelated
single-Slater determinant
state,
 \begin{equation}
 \ket{\psi(t=0)} = \ad{1} \ad{3} \ket{vac}.
 \label{eq:intial.state.diss}
 \end{equation}

In order to characterize the time evolution described by the
master
equation (Eq.\eqref{master.equation}), we use the Runge-Kutta
integration.
 This method entails an error due
to inaccuracies in the numerical integration, but the full
density
matrix is represented without any approximation.
\begin{figure}
\includegraphics[scale=0.45]{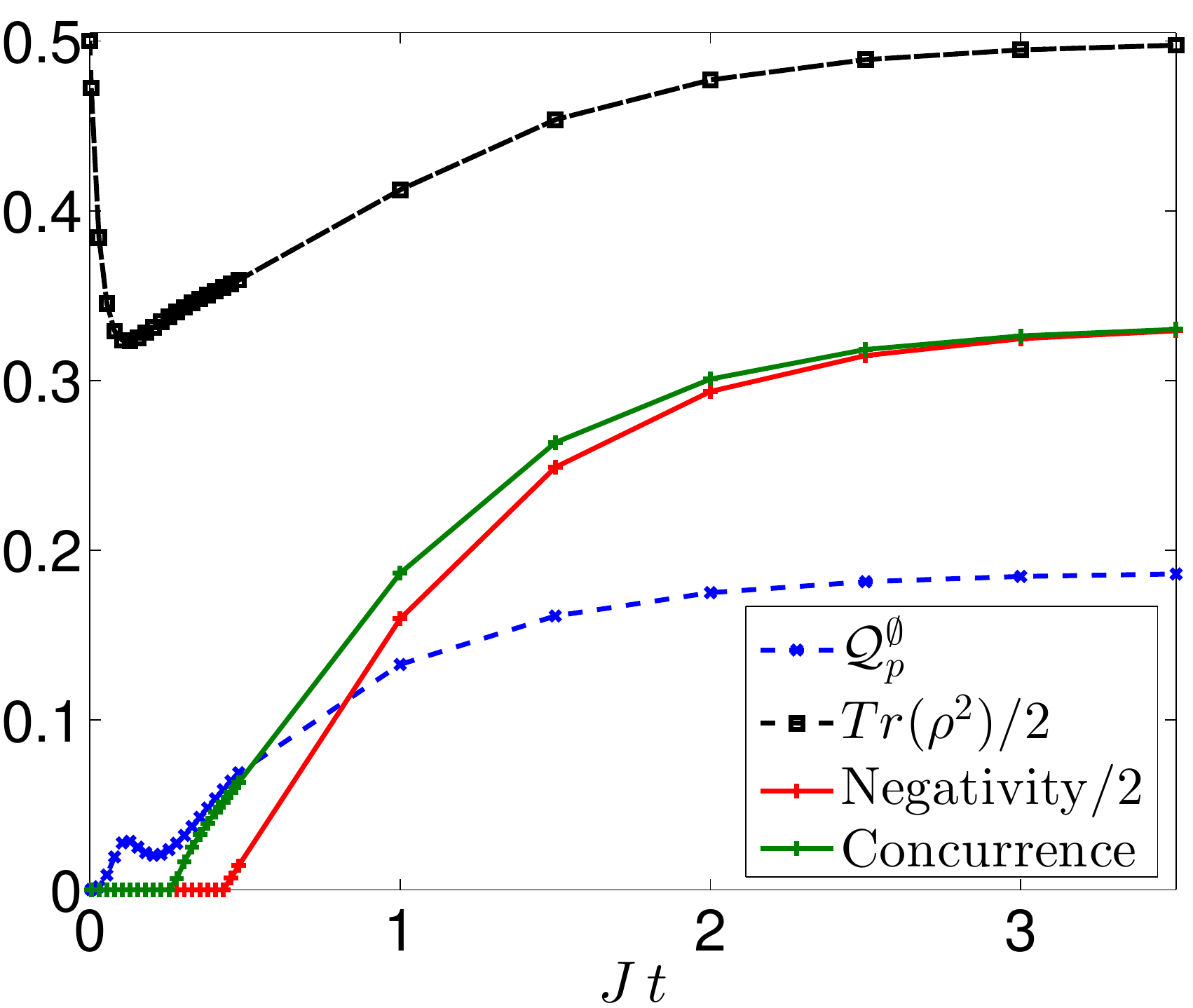}
\caption{ (Color online) Dissipative evolution for a system with $L=4$ sites and $N=2$ particles, 
considering an initially uncorrelated single-Slater 
determinant state (Eq.\eqref{eq:intial.state.diss}). We show the time evolution
for the quantumness of correlation of the state ($\mathcal{Q}_p^{\emptyset}$), its purity ($\Tr(\rho^2)$) and its entanglement according to the Shifted Negativity \protect\cite{iemini_ComputableMeasures} and Concurrence \protect\cite{eckert02} quantifiers.}
\label{fig:diss.evolution}
\end{figure}

We present our numerical simulation in
Fig.\eqref{fig:diss.evolution}.  At the early stages of  the
evolution, 
the quantum state becomes highly mixed,  and
we already see the creation of quantumness between its
particles.
Already in the beginning of the evolution,  $tJ \ll 1$, several excited modes 
 of the Liouvillian have non-trivial effects in the dynamics,
and in this way the observables present fast oscillations in
this regime.
For longer times,  only the first excited states of the
Liouvillian
 have non-trivial effects in the evolution,
 and the dynamical behavior becomes smoother. We see that
 the quantum state 
 is driven, exponentially fast in time, to a pure state with 
 non-trivial quantumness of correlations,
 as expected from the previous discussions.
Interesting to notice that the quantumness for the steady state agrees with the entanglement entropy for indistinguishable 
 particles \cite{iemini_ComputableMeasures},
 \textit{i.e.,} 
$\mathcal{Q}^{\emptyset}_p(\ketbra{\psi}{\psi}_D) =
S(\ketbra{\psi}{\psi}_D) - \ln N$.

{ To exemplify the different nature of the quantumness
 of correlations to the entanglement of particles, we also plot 
 the entanglement dynamics according to some usual 
 quantifiers in the literature. Precisely, we 
 compare the quantumness of correlations to the 
 Shifted Negativity \cite{iemini_ComputableMeasures} 
  and Concurrence \cite{eckert02}.
 It it very clear the more general character 
 of the quantumness of correlation,
 showing a richer dynamics for the initial evolution, while 
 there is absolutely no entanglement in the state. We can also notice that the state 
 has bound entanglement, analogous to the positive partial transpose (PPT)
  entangled states in distinguishable systems, for values of $Jt\in[0.3,0.45]$, as 
 the negativity is null, besides the concurrence has non zero values. }

\section{Local Projectors Structure - Numerical Results}\label{num.res}

 In this section we analyze, numerically, the quantumness of more general 
 symmetric quantum states 
 and their respective local projectors, as given in Definition 1. 
  Our motivations here are two-fold: (i) to analyze if there are 
  other solutions for the local projectors in Lemma 1, 
  \textit{i.e.,} solutions which do not share the quantum state symmetry; 
 (ii) to  analyze if Lemma 1 could be extended for  
  general states with parity symmetry, $[\rho, (-1)^{\hat N}]=0$, beyond  
  the restriction to a single 
  symmetry eigenvalue.
\begin{figure*}
\includegraphics[scale=0.3]{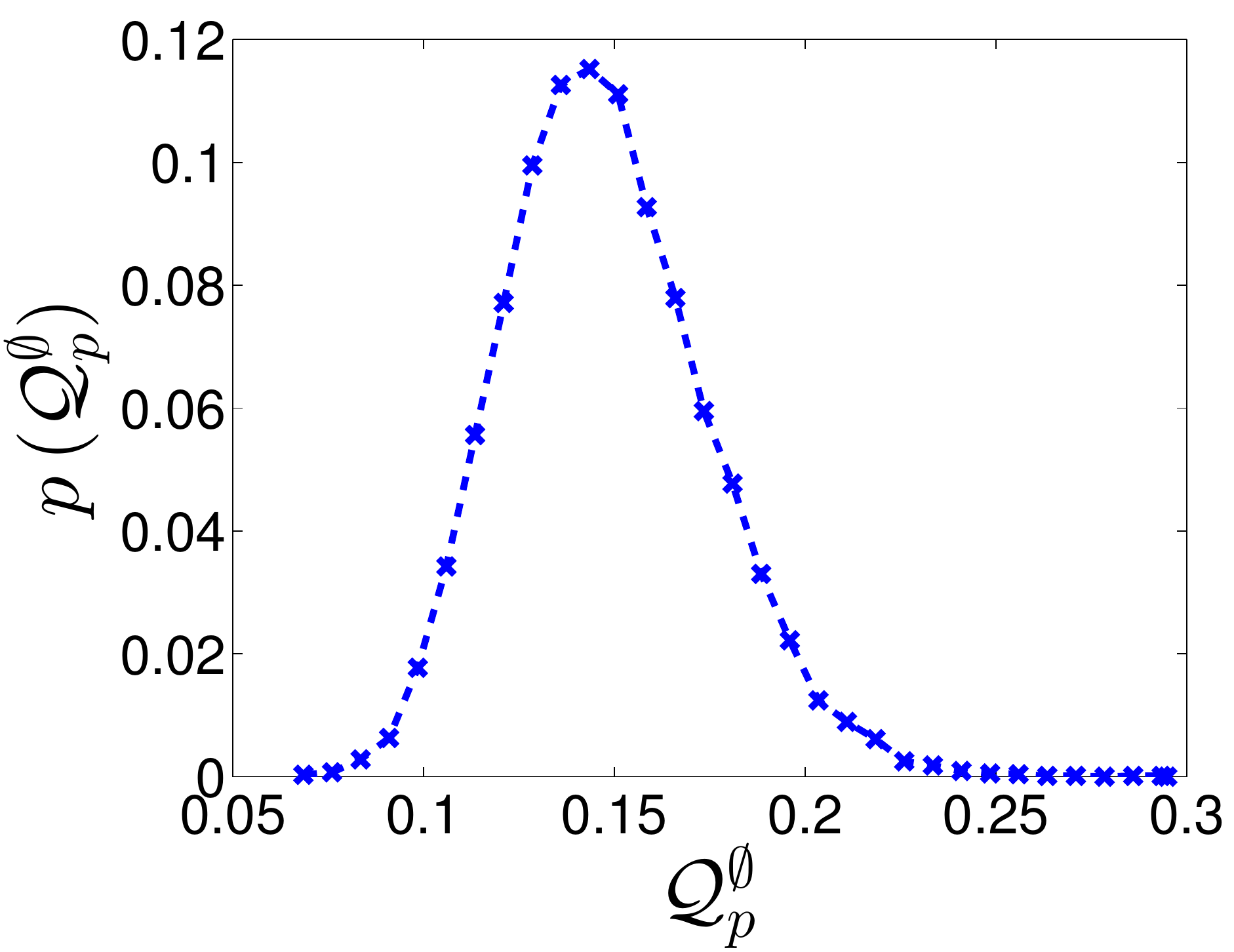}
\includegraphics[scale=0.3]{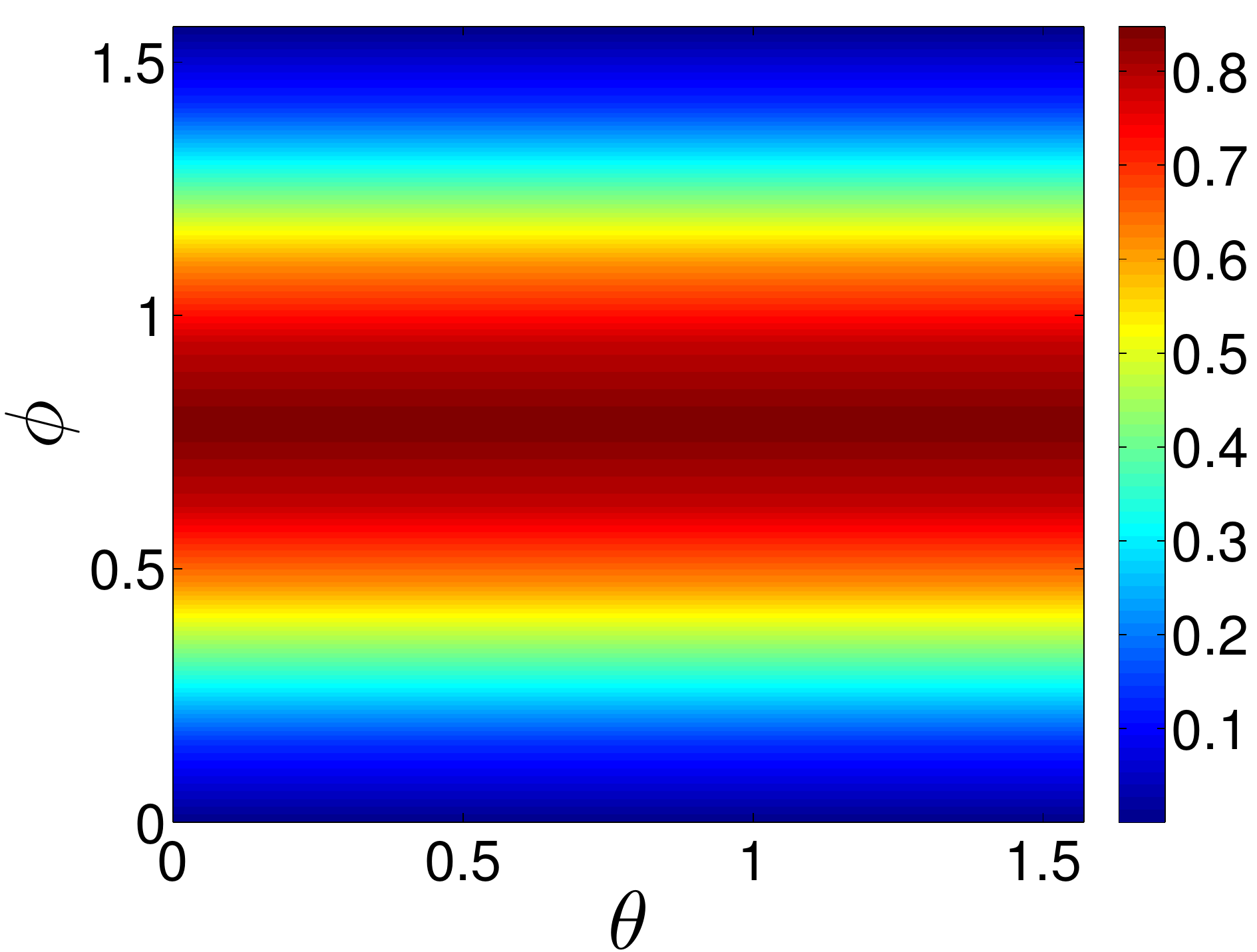}
\includegraphics[scale=0.3]{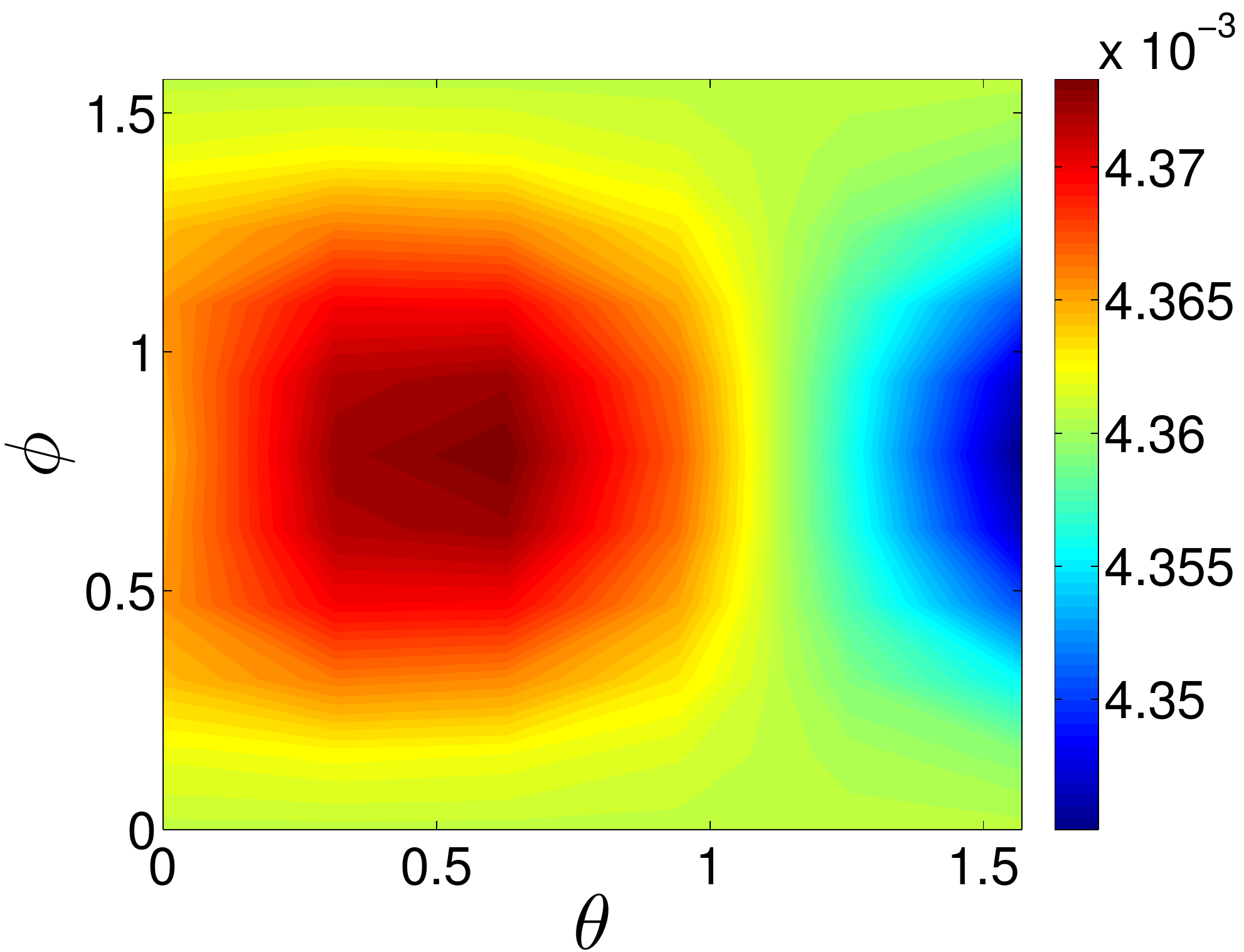}
\caption{(Color online)  On the left, we show the probability distribution for 
the quantumness of correlations in the sampled space 
to $E_{par}$. We use 
a sample space of $\sim 10^5$ quantum states. In the middle (on the right)
we show  the $T_E(\phi,\theta)$	 function for the $E_{par}^{(1)}$ ($E_{par}$) ensemble.} 
\label{fig:num.quantumness}
\end{figure*}

 We focus on a simple bipartite case, 
formed by  a  single particle
 mode ($1$ qubit) and the other  $L-1$ modes.
In this case, we can parametrize
 the local projectors in the single particle mode subspace with  
only two parameters, $\{\phi,\theta\}$. Since the projectors are rank-$1$, we need to  parametrize  only two orthogonal pure
states
 in such subspace,  as follows,
 \begin{eqnarray}
\ket{\psi_1(\phi,\theta)} &=& \cos(\phi) \ket{0} + e^{i
\theta}\sin(\phi)\ket{1}\\
\ket{\psi_2(\phi,\theta)} &=& -e^{-i\theta}\sin(\phi)\ket{0} +
\cos(\phi)\ket{1};
 \end{eqnarray}
where $\ket{1}$ ($\ket{0}$) is the state with one (no) fermion occupying the single particle mode, $0 \leq \phi \leq \pi$ and $ 0\leq \theta < 2\pi$. The local projectors are thus 
defined as $ \Pi_{1(2)}^{(\phi,\theta)} = 
\ketbra{\psi_{1(2)}(\phi,\theta)}{\psi_{1(2)}(\phi,\theta)}$. 
We define now the function $T_E(\phi,\theta)$ for such
projectors as,
\begin{equation}
T_E(\phi,\theta) = \int_{\rho \in E} \left\{ S(\rho ||
{\Pi}^{(\phi,\theta)}[\rho]) -\mathcal{Q}_p^{\emptyset}(\rho) \right\}
d\rho,
\label{eq:T.projectors}
\end{equation}
where ${\Pi}^{(\phi,\theta)}[\rho] = 
\hat{\Pi}_1^{\phi,\theta}\rho \hat{\Pi}_1^{\phi,\theta} + 
\hat{\Pi}_2^{\phi,\theta}\rho \hat{\Pi}_2^{\phi,\theta}$, and
the integral
is taken over an ensemble $E$ of quantum states. 
The above function gives us a notion of an effective
perturbation - in comparison
to the minimum one -
of the local projectors onto the corresponding ensemble $E$.

We consider an ensemble $E_{par}$ with parity symmetry,
$ [\rho,(-1)^{\hat N}]=0,$ $\forall \rho \in E_{par}$, and a
subset $E_{par}^{(1)}$
thereof, whose quantum states have a single eigenvalue of the
parity symmetry.
 This latter case - $E_{par}^{(1)}$ -
 corresponds to the conditions of Lemma 1.

  The integration of Eq.\eqref{eq:T.projectors} is
 performed over a  sample 
of quantum states ($\sim 10^5$ states) approximating  the corresponding
ensemble. The sample
is chosen according to the Haar measure. On the left of
Fig.\eqref{fig:num.quantumness}, we plot
the probability distribution for the quantumness of correlations of the sampled space corresponding to the $E_{par}$ ensemble.
 
Let us now analyze the $T_E(\phi,\theta)$ function for the
$E_{par}^{(1)}$ and
$E_{par}$ ensembles. In the former case (middle of 
Fig.\eqref{fig:num.quantumness})
 we obtained
that $T(\phi,\theta) = 0$ $\iff$ $\phi = 0,\pi/2$ (we omit here
$\phi>\pi/2$ for
 symmetry reasons), 
 which corresponds to the projectors with the shared symmetry, 
 as expected from our Lemma.
It was also observed in our simulations that $S(\rho ||
{\Pi}^{(\phi,\theta)}[\rho])
 - \mathcal{Q}(\rho) = 0$ $\iff$ 
$\phi = 0,\pi/2$, confirming  that the only optimal  projectors  are indeed those of the Lemma. 
On the right of Fig.\eqref{fig:num.quantumness},  we show our results
for
 the more general ensemble $E_{par}$. We see
now that the Lemma cannot be extended for such ensemble. We also see that  there is no 
 set of symmetric local projectors minimizing the local disturbance.
The analysis of the quantumness of correlations in these cases demands a more
careful treatment.
 
\section{Conclusion}\label{conc}

 In this work we proposed a new description for the 
 quantification of quantumness of correlations in fermionic systems. 
We proved in Lemma~\ref{lemma} that the symmetries of a state 
can improve the optimization of the local disturbance, once that 
 the optimal local projective measurement is also symmetric. As discussed, 
 it holds for symmetric states with a single eigenvalue of the symmetry operator. 
 Numerical evidences suggest the uniqueness of the symmetric solution 
 for the minimal local disturbance, as well as the impossibility of extending 
 the Lemma~\ref{lemma} for states with multiple eigenvalues of the symmetry operator.  
In Theorem~\ref{teo1},  we restrict our discussion for states with parity symmetry, showing
that the minimization of the multipartite relative entropy of quantumness reduces to the notion of 
quantumness of correlations of indistinguishable particles. 
By means of the activation protocol, we have also characterized 
the class of fermionic states without quantumness of correlations.
We illustrated our results with  the dynamics of 
quantumness of correlations for a purely dissipative system
 of two particles and four sites. 
Our results  shed new light and  give  fresh perspectives on the 
characterization and quantification of quantum correlations in fermionic systems. 

\acknowledgments
This work is supported by INCT-Quantum Information, CNPq and FAPEMIG. We would like to thank F.F. Fanchini and P.H.S. Ribeiro by fruitfull discussions.  
 
\section*{Appendix -  Proof of Lemma 1}

\begin{proof}
Consider a symmetry $\Theta$ with $N$ degenerate eigenvalues
$\{\theta_j\}_{j=1}^N$. In its eigenbasis, it can be written as:
\begin{equation}
\Theta = \begin{pmatrix}
\hat{\theta}_1 & \cdots & 0 \\
0  & \ddots & 0\\
0 & \cdots & \hat{\theta}_N  
\end{pmatrix}, \end{equation}
where $\hat{\theta}_j$ are $n_j\times n_j$ matrices, with
$n_j$ being    the degeneracy of eigenvalue $\theta_j$. The observable
$\Theta$ acts
on a D-dimensional Hilbert space $\M{H}^D$, where $D=
\sum_{j=1}^N n_j$.

A given state $\rho_j$ has a symmetry over an eigenvalue
$\theta_j$ if $[\rho_j, \hat{\theta}_j]=0$.
Therefore, a system described by a density matrix
$\rho_S\in\M{D}(\M{H}_S)$ has the
symmetry $\Theta$, if the density matrix can be written as a
convex combination of the set of $\rho_j$ for all eigenvalues
$\theta_j$, namely:
\[ \rho_S = \sum_j q_j \rho_j.\]
Thus there exists an isometry $V_X:\M{H}_S\rightarrow
\M{H}_S\otimes\M{H}_X$,  which acts
on the eigenbasis of $\Theta$ as:
\[ V_X \ket{\theta_j(l)}_S = \ket{\theta_j(l)}_S\ket{\theta_j}_X
,\]
where $\{\ket{\theta_j(l)}_S\}_{l=1}^{n_j}$ are the eigenvectors
related to the eigenvalue $\theta_j$. 
 $\M{H}_S$ is
the Hilbert space of the system, and $\M{H}_X$ is an ancillary
space such that $\text{dim}(\M{H}_S)\times
\text{dim}(\M{H}_X)=D$.
Therefore, the action of the isometry over the symmetric density
matrix $\rho_S$ is:
\[ \rho_{SX}=V_X\rho_SV_X^{\dagger} = \sum_j q_j
\rho^S_j\otimes\ketbra{\theta_j}{\theta_j}_X, \]
which is a block diagonal matrix, with the blocks labeled by the
eigenvalues of $\Theta$, and consequently $[ \rho_{SX},\Theta]=0$.
The set $\{\ket{\theta_j}\}_{j=1}^N$ is 
an orthonormal basis on $\M{H}_X$. Each density matrix
$\rho_j^S$ acts on the space spanned by the eigenvectors of the
eigenvalue $\theta_j$.

We separate the projective measurements over a symmetric state
in two kinds: with and without the symmetry. To represent these
two different
measurements we use the approach of an enlarged state space, 
described below.
The way that the measurement acts over the space $\M{H}_X$
defines if it has or has not the symmetry.
A projective measurement with the symmetry must keep the block
diagonal form of the density matrix, respecting the label
on the eigenvalues of $\Theta$ described by the basis
$\{\ket{\theta_j}\}_j$ on $\M{H}_X$. On the other hand, a
projective measurement without
the symmetry will create overlap with this same basis,
destroying the symmetry and the block diagonal structure of the
density matrix in the
enlarged space. 
In the case of a modes partitioning, the Hilbert space of the
system is composed as
 $\M{H}_S = \M{H}_A\otimes
\M{H}_B$.
Thus to prove the Lemma, we define two local projective
measurement maps:
a $\M{P}^{\theta}_{BX}$ that has the symmetry on $\Theta$, and a
map $\Pi_{BX}$
without the symmetry. 
We obtain the proof of the Lemma by showing  that the local
disturbance, created on
$\rho_{ABX}$ by $\M{P}^{\theta}_{BX}$, is smaller than that
crated by $\Pi_{BX}$, for a state with symmetry over one
eigenvalue $\theta_j$:
\begin{equation}
S(\rho_{ABX}||\M{P}^{\theta}_{BX}(\rho_{ABX}) ) \leq
S(\rho_{ABX}|| \Pi_{BX}(\rho_{ABX}) ).
\end{equation}

As $\M{P}^{\theta}_{BX}$ has the symmetry over the eigenvalues
of $\Theta$,  it must act on $\rho_{ABX}
=\sum_j q_j \rho^{AB}_j\otimes\ketbra{\theta_j}{\theta_j}_X$
inside the blocks:
\begin{equation}\label{ptheta}
 \M{P}^{\theta}_{BX}(\rho_{ABX}) =  \begin{pmatrix}
q_1\M{P}_{B}(\rho^{AB}_1) & \cdots & 0 \\
0  & \ddots & 0\\
0 & \cdots & q_N\M{P}_{B}(\rho^{AB}_N)
\end{pmatrix},\end{equation}
where $\M{P}_{B}(\rho^{AB}_j)$ is a local projective measurement
over subsystem $B$. For the projective measurement to satisfy
Eq.\eqref{ptheta}, the projectors must be in the form:
\[P_{b,x} = P_b^B\otimes\ketbra{\theta_x}{\theta_x}_X. \]
Then its action over $\rho_{ABX}$ respects:
\[ \M{P}^{\theta}_{BX}(\rho_{ABX}) = \sum_j q_j
\B{P}_B(\rho^{AB}_j)\otimes{\theta_j}_X, \]
where $\B{P}_B(\rho^{AB}_j)$ is a local projective measurement
map over subsystem $B$, with projectors $\{P_b \}$,
as presented in Eq.\eqref{gen.projective.meas}. For the disturbance
created by this
measurement,  we can write:
\[ S(\rho_{ABX}||\M{P}^{\theta}_{BX}(\rho_{ABX}) ) = \sum_j q_j
S(\rho^{AB}_j||\B{P}_{B}(\rho^{AB}_j) ), \]
once that $S(\sum_x p_x \ketbra{x}{x}\otimes\rho_x||\sum_x p_x
\ketbra{x}{x}\otimes\sigma_x) = \sum_x p_x S(\rho_x||\sigma_x)$.
On the other hand, the projective measurement $\Pi_{BX}$ creates
an overlap on the orthonormal basis $\{\ket{\theta_j}\}$.
It must act over space $B$ and $X$ locally, without creating
correlations between them, however creating overlap
between the basis $\ket{\theta_j}$ and another basis in
$\M{H}_X$.
We have:
\begin{equation} \Pi_{b,x} = \Pi_b \otimes \ketbra{x}{x}_X,\end{equation}
for $\{ \ket{x}\}$ an orthonormal basis in $\M{H}_X$, thus the
action of the map can be written as:
\[ \Pi_{BX}(\rho_{ABX}) = \sum_j q_j \sum_x p(x|j)
{\Pi}_B(\rho_j^{AB})\otimes \ketbra{x}{x}_X, \]
where $p(x|j) = |\braket{x}{\theta_j}|^2$ represents the overlap
between the eigenstates of $\Theta$ under the action of the
projective measurement.
As the relative entropy decreases under the partial trace operation,
then tracing over subsystem $X$, the local disturbance created
by $\Pi_{BX}$ satisfies:
\[ S(\rho_{ABX}|| \Pi_{BX}(\rho_{ABX}) ) \geq S(\sum_j q_j
\rho_j^{AB} || \sum_j q_j{\Pi}_B(\rho_j^{AB})), \]
with $\sum_x p(x|j) = 1$ and $\Tr_X(\rho_{ABX}) = \sum_j q_j
\rho_j^{AB}$.

Therefore, considering a bipartite state $\rho_l^{AB}$ with
symmetry in only one eigenvalue $\theta_l$ of $\Theta$, 
there exists just one term in the
sum $\sum_j q_j \delta_{j,l}= q_l =1$, which implies $\rho_{ABX} =
\rho_l^{AB}\otimes\ketbra{\theta_l}{\theta_l}_X$, and the
disturbances satisfy:
\begin{equation}\label{res1}
S(\rho_{ABX}||\M{P}^{\theta}_{BX}(\rho_{ABX}) ) =
S(\rho^{AB}_l||\B{P}_{B}(\rho^{AB}_l) ),
\end{equation}
and 
 \begin{equation}\label{res2}
S(\rho_{ABX}|| \Pi_{BX}(\rho_{ABX}) ) \geq S(\rho_l^{AB}
||{\Pi}_B(\rho_l^{AB})).
\end{equation}
The optimization in the one-way work deficit is taken over all
projective measurements that act over subspace $B$. Therefore,
as ${\Pi}_B(\rho_l^{AB})$ and $\B{P}_{B}(\rho^{AB}_l)$ are
restricted by the symmetry to act on this same space, and 
the state has null projection on any other subspace of the symmetry,
the smallest local disturbance created by these two projective 
measurement can attain the same value:
\[\min_{\B{P}_{B}}S(\rho^{AB}_l||\B{P}_{B}(\rho^{AB}_l) )=
\min_{{\Pi}_{B}}S(\rho_l^{AB} || {\Pi}_B(\rho_l^{AB})). \]
Finally,by  Eq.\eqref{res1} and
Eq.\eqref{res2}, we obtain: 
\[S(\rho_{ABX}||\M{P}^{\theta}_{BX}(\rho_{ABX}) ) \leq
S(\rho_{ABX}|| \Pi_{BX}(\rho_{ABX}) ), \]
which means that local projective measurements, with the symmetry
of the state, create less disturbance than
local projective measurements without the symmetry, proving the
Lemma.

Besides the proof was performed for bipartite systems, it also
holds for multipartite systems, simply generalizing the
projective measurements over partition $B$ to  multipartite
projectors.
\end{proof}

\end{document}